\documentclass[12pt]{amsart}
\usepackage[english]{babel}
\usepackage[utf8]{inputenc}
\usepackage{amsmath}
\usepackage{graphicx}
\usepackage{subfigure}
\usepackage{amssymb}
\usepackage{amsthm}
\usepackage{bm}
\usepackage{tikz-cd}
\usepackage{mathrsfs}
\usepackage[colorinlistoftodos]{todonotes}
\usepackage{enumitem}
\usepackage{yfonts}
\usepackage{dsfont}
\usepackage{bbm}
\usepackage{geometry}
\usepackage{setspace}
\usepackage{tikz}
\usepackage{pgfplots}
\usepackage{cite}
\pgfplotsset{compat=newest}
\usepgfplotslibrary{fillbetween}
\usetikzlibrary{positioning}
\usetikzlibrary{decorations.pathreplacing}
\usetikzlibrary{decorations,arrows}
\usetikzlibrary{decorations.markings}
\usetikzlibrary{patterns}
\usetikzlibrary{quotes,angles}
\usetikzlibrary{arrows}

\numberwithin{equation}{section}
\newtheorem{theorem}{Theorem}[section]
\newtheorem{lemma}[theorem]{Lemma}

\newtheorem{corollary}[theorem]{Corollary}

\newtheorem{remark}[theorem]{Remark}
\newtheorem{proposition}[theorem]{Proposition}
\newtheorem{assumption}[theorem]{Assumption}
\allowdisplaybreaks[4]

\title[Absence of a.c. spectrum for capacitance operators]
{Absence of Absolutely Continuous Spectrum for One-Dimensional Capacitance Operators with Critical Long-Range Interactions}

\author{Habib Ammari}
\address[H. Ammari]{Department of Mathematics, ETH Z\"{u}rich, R\"{a}mistrasse 101, CH-8092 Z\"{u}rich, Switzerland}
\email{habib.ammari@math.ethz.ch}

\author{Jiayu Qiu}
\address[J. Qiu]{Department of Mathematics, ETH Z\"{u}rich,
R\"{a}mistr. 101, CH-8092 Z\"{u}rich, Switzerland}
\email{jiayu.qiu@sam.math.ethz.ch}

\date{}

\begin{document}

\begin{abstract}
In this paper, we establish the absence of absolutely continuous spectrum for scalar one-dimensional Laurent operators with multiplicative disorder and critical long-range hoppings. In particular, the requirement of at least quartic off-diagonal decay rate in the existing theory is bypassed by exploiting the finite-difference bound of the convolution kernel, which is available even when the interaction decays only as $|n|^{-1}\log^{-\beta}|n|$ with $\beta>1$. We then apply the abstract main result to a concrete physical model, the capacitance operator in subwavelength physics.  For a chain of three-dimensional resonators with one resonator per period, we verify the required decay and finite-difference estimates and conclude the absence of absolutely continuous spectrum. These results leave open the distinction between pure point and singular continuous spectrum, but constitute a key step in proving Anderson localization in wave systems with critical long-range interactions.  


\end{abstract}

\maketitle

\bigskip
\noindent \textbf{Keywords.} Capacitance operator, absolute continuous
spectrum, long-range hopping, Bragg scattering, Anderson localization.

\bigskip
\noindent \textbf{AMS subject classifications.}
 35B34, 35P20, 35P25, 35J05, 35C20, 78A48, 82D03.

\tableofcontents

\section{Introduction}

Disorder can fundamentally change the spectral properties of a periodic wave system. In a periodic structure, the spectrum is decomposed by the Floquet-Bloch transform into a union of bands, which leads to its absolute continuous nature \cite{kuchment2016overview}. In contrast, random perturbations may destroy the absolutely continuous component of the spectrum, and under stronger assumptions the operator may only have a pure point spectrum with localized eigenvalues, as the main idea of Anderson localization; see \cite{stolz2011introduction} for an introduction. For one-dimensional lattice operators with short-range off-diagonal hopping terms, the mathematical theory of Anderson localization is well-developed; see, for example, the survey in~\cite{kirsch}. On the other hand, the long-range off-diagonal hopping rate introduces a fundamental difficulty, which is the main focus of this paper. We will consider a class of long-range one-dimensional lattice operators, whose off-diagonal decay rate falls outside the existing theory on Anderson localization, and study its spectrum under multiplicative identically and independently distributed (i.i.d.) perturbation.

The class of long-range lattice operators considered in this paper is motivated by a concrete model arising in the field of subwavelength physics, the capacitance operators; see \cite{lemoult.kaina.ea2013Wave,lemoult.kaina.ea2016Soda,leroy.devaud.ea2002air,liu.zhang.ea2000Locally,ma.sheng2016Acoustic} for unusual properties of subwavelength resonant systems and \cite{cbms} for their mathematical foundations. As shown in \cite{cbms, ammari.davies.ea2024Anderson,ammari.davies.ea2024Functional}, the capacitance operator describes the leading-order spectral and transport properties of subwavelength high-contrast resonator systems. Interestingly, when considering a periodic resonator chain embedded in $\mathbb{R}^3$, the associated capacitance operator $\mathcal{C}$ exhibits a long-range off-diagonal hopping rate $ \mathcal{C}_{0,n}\sim |n|^{-1}\log^{-2}|n|$; see \cite{ammari2026longrange}. This decay rate is critical in the sense that it is slower than $|n|^{-p}$ for any $p> 1$, while it is still absolutely summable. Importantly, this decay rate falls outside the scope of the established mathematical theory of Anderson localization, as discussed below.

The Anderson localization of one-dimensional lattice operators has been studied extensively; see \cite{aizenman.molchanov1993Localization,jaksicmolchanov1999,simonspencer1989,shi2023almost_periodic_long_range,qi2026transfer} and the references therein. In particular, under suitable assumptions on the Floquet symbol and the random potential, Jakšić and Molchanov \cite{jaksicmolchanov1999} prove the absence of absolutely continuous spectrum at arbitrarily weak disorder for off-diagonal hopping rate decaying strictly faster than \(|n|^{-4}\). Their theorem does not cover the critical decay of the capacitance operator considered here. Our contribution is to exploit an additional structure of this physical model: we find that the capacitance operator satisfies the following \textit{finite-difference estimates}
\[
 |\Delta^{(j)}\mathcal{C}_{0,n}|\sim
 \frac{1}{(1+|n|)^{j+1}\log^2(e+|n|)},
 \qquad j=0,1,2.
\]
With this finite-difference estimate, we are able to combine the sparse-gap construction in \cite{jaksicmolchanov1999} with a trace-class estimate of the spatial cut as in \cite{simonspencer1989}, by which we conclude the proof using the stability of absolutely continuous spectrum under trace-class perturbations; see Section \ref{sec_main_proof} for details. For general reference resonators, we prove the absence of absolute continuous spectrum within a local acoustic-window. In addition, when the reference resonator is reflectional symmetric, we are able to prove the absence of an absolutely continuous component for the whole spectrum; see Section \ref{sec_cap} for details.

The paper is organized as follows. Section~\ref{sec_main_result} states the abstract main result (Theorem \ref{thm_absence_ac}), which establishes the absence of absolutely continuous spectrum (upon a multiplicative i.i.d. perturbation) under certain hypotheses on the Floquet symbol, the off-diagonal decay rate, and the finite-difference estimates of the one-dimensional lattice operators. The proof of Theorem \ref{thm_absence_ac} is presented in Section \ref{sec_main_proof} using the sparse-gap construction, trace-class estimates of the spatial cut, and the Kato-Rosenblum theorem. Section~\ref{sec_cap} verifies  the decay and local Floquet hypotheses for one-resonator chains embedded in $\mathbb{R}^3$ and states the global extension under reflection symmetry. Appendix~\ref{app:reflection-monotonicity} proves the corresponding strict monotonicity by a half-cell Dirichlet principle. Section~\ref{sec:conclusion} discusses the conclusions and the remaining localization questions.

\section{Setup and main results} \label{sec_main_result}

Let $\mathcal{C}$ be a bounded periodic self-adjoint operator on the Hilbert space $H:=\ell^2(\mathbb{Z};\mathbb{C})$ with its convolution kernel denoted by $\mathcal{C}_{n,m}\in \mathbb{C}$ ($n,m\in\mathbb{Z}$). The periodicity indicates that $\mathcal{C}_{n+1,m+1}=\mathcal{C}_{n,m}$. The first assumption is on the off-diagonal decay rate of $\mathcal{C}$:
\begin{assumption} \label{asmp_decay}
There exists $\beta>1$ and $M>0$ such that for all $n,m\in\mathbb{Z}$ with $n\neq m$, it holds that
\begin{equation}
|\mathcal{C}_{n,m}|\leq \frac{M}{|n-m|\log^{\beta}(|n-m|+1)}
\end{equation}
\end{assumption}
Beyond off-diagonal decay, we pose the following assumption on the decay rate of finite differences of $\mathcal{C}_{n,m}$ up to the second order. More precisely, let $c_{n}:=\mathcal{C}_{0,n}$ and define
\begin{equation*}
\Delta^{(0)} c_{n}:= c_{n}, \quad \Delta c_{n}:=c_{n+1}-c_{n},\quad \Delta^{(j)}c_{n}:=\Delta (\Delta^{(j-1)}c_{n}) \quad (j> 1).
\end{equation*}
Then we assume the following. 
\begin{assumption} \label{asmp_finite-difference}
For each $j=1,2$, there exists $M_{j}>0$ such that
\begin{equation*}
|\Delta^{(j)}c_{n}|\leq \frac{M_{j}}{(1+|n|)^{j+1}\log^{\beta}(e+|n|)},\quad \forall n\in\mathbb{Z} .
\end{equation*}
\end{assumption}

The last assumption is on the regularity of the Floquet symbol,
\begin{equation*}
\hat{\mathcal{C}}(\kappa):=\sum_{n\in\mathbb{Z}}c_{n}e^{-in\kappa} \quad (\kappa\in [-\pi,\pi])
\end{equation*}
i.e., the Fourier transform of $\mathcal{C}$, which is well-defined since $c_{n}$ is absolutely summable by Assumption \ref{asmp_decay}. We assume:

\begin{assumption} \label{asmp_eigenvalue_monotone}
$\hat{\mathcal{C}}(\kappa)$ is a continuous real even function, twice continuously differentiable and strictly monotone on $(0,\pi)$.
\end{assumption}
Now we define the random operator considered here by introducing a multiplicative i.i.d. perturbation. Let $\{\omega_{n}\}_{n\in\mathbb{Z}}$ be a sequence of i.i.d. random variables with a common bounded density $\mu\in L^{\infty}(\mathbb{R})$ of a compact support whose interior contains the origin. We define the diagonal operator
\begin{equation*}
\mathcal{V}_{\omega,\epsilon}\in \mathcal{B}(H),\quad
(\mathcal{V}\psi)(n):=(1+\epsilon\omega_{n})\psi(n),
\end{equation*}
and the associated randomly perturbed Hamiltonian
\begin{equation*}
\mathcal{C}_{\omega,\epsilon}:=\mathcal{V}_{\omega,\epsilon}\mathcal{C}\mathcal{V}_{\omega,\epsilon}.
\end{equation*}

The main abstract result of this paper is the following theorem on the absence of an absolutely continuous spectrum for $\mathcal{C}_{\omega,\epsilon}$.
\begin{theorem} \label{thm_absence_ac}
Suppose that $\mathcal{C}$ satisfies Assumptions \ref{asmp_decay}, \ref{asmp_finite-difference}, and \ref{asmp_eigenvalue_monotone}. Then there exists $\epsilon_{0}>0$ such that for any $\epsilon\in (0,\epsilon_0)$, the absolute continuous spectrum of $\mathcal{C}_{\omega,\epsilon}$ is empty almost surely, i.e.,
\begin{equation*}
\text{Spec}_{a.c.}(\mathcal{C}_{\omega,\epsilon})=\emptyset,\quad \text{for a.e. $\omega$.}
\end{equation*}
\end{theorem}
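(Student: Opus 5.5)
We follow the Jakšić--Molchanov sparse-gap strategy, replacing their quartic-decay hypothesis by the finite-difference bounds of Assumption \ref{asmp_finite-difference}. First, by ergodicity of the i.i.d. family under shifts, $\text{Spec}_{a.c.}(\mathcal{C}_{\omega,\epsilon})$ is almost surely equal to a deterministic set $\Sigma_{ac}$. It therefore suffices to show that for every $E$ in the almost-sure spectrum there is a neighbourhood on which the a.c. spectral measure vanishes almost surely. We then construct, on a full-measure event, a random sparse sequence of sites $x_1<x_2<\dots$ with $x_{k+1}-x_k\to\infty$ very fast (and symmetric ones on the negative axis), together with a decoupled operator $\widetilde{\mathcal{C}}_{\omega,\epsilon}$. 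This operator is obtained by deleting all matrix entries $\mathcal{C}_{n,m}$ with $n,m$ on opposite sides of some cut $x_k$, and then restricting to the blocks $[x_k,x_{k+1})$.

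The first key step is to show that $\mathcal{C}_{\omega,\epsilon}-\widetilde{\mathcal{C}}_{\omega,\epsilon}$ is trace class. By Kato--Rosenblum this gives equality of the a.c. parts, and a direct sum of finite blocks has no a.c. spectrum. A single cut at $x$ produces the Hankel-type block $(c_{m-n})_{n<x\le m}$. Using only the crude bound $|c_n|\lesssim |n|^{-1}\log^{-\beta}|n|$ from Assumption \ref{asmp_decay}, this block is Hilbert--Schmidt but not obviously trace class, and this is exactly where the quartic decay was used in \cite{jaksicmolchanov1999}. Instead we follow \cite{simonspencer1989} and write the block as a sum of rank-one pieces via summation by parts, namely $c_{m-n}=\sum_{k}(\Delta c)_{\cdot}$ along anti-diagonals, or we bound the trace norm of a Hankel matrix $(h_{i+j})_{i,j\ge0}$ by $\sum_{j}(j+1)|\Delta^{(2)}h_j|\,(j+1)$ (a Peller/Bonsall-type estimate). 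By Assumption \ref{asmp_finite-difference} with $j=2$, the sum $\sum_j (j+1)^2 (1+j)^{-3}\log^{-\beta}(e+j)$ converges precisely because $\beta>1$. Hence each cut has trace norm bounded uniformly in $x$, and the multiplicative factors $(1+\epsilon\omega_n)$ are uniformly bounded. To make the total perturbation trace class, we further choose the cuts in favourable configurations, so that the trace norms are summable (e.g.\ $\le 2^{-k}$ after subtracting a suitably localized piece). Alternatively, we only cut where the tail from cut $k$ to the far blocks is controlled by $\sum_{n>x_{k+1}-x_k}(\ldots)$, which is small by sparsity. Cross terms between non-adjacent blocks are handled by the same bound, using $\log^{-\beta}$ decay at distance $x_{k+1}-x_k$ together with rapid growth of the gaps.

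The second key step, and the main obstacle, is to ensure that the cut operator really isolates the spectrum. With uniform trace-class bounds alone, the sum over infinitely many cuts diverges. Following \cite{jaksicmolchanov1999}, we therefore use randomness to select cuts at places where a long stretch of $\omega_n$ makes the local operator have a spectral gap near $E$. Assumption \ref{asmp_eigenvalue_monotone} enters here: strict monotonicity and $C^2$ regularity of $\hat{\mathcal{C}}$ give nondegenerate band structure and control of generalized eigenfunctions (a Weyl-sequence/Combes--Thomas type decay in the gap). In a gap, the resolvent across the cut decays, so the cut contribution restricted to energies near $E$ is summable. Borel--Cantelli, using the bounded density $\mu$ and smallness of $\epsilon<\epsilon_0$, guarantees that infinitely many such stretches occur almost surely with the prescribed sparsity. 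The hard part will be quantitative: showing that the effective coupling across a gap region decays fast enough given only $|n|^{-1}\log^{-\beta}$ hopping. Here we again use the second differences in Assumption \ref{asmp_finite-difference} to convert the slow tail into a trace-norm bound that is summable over $k$.
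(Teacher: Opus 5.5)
\textbf{There is a genuine gap: your first key step is false.} The difference $\mathcal C_{\omega,\epsilon}-\widetilde{\mathcal C}_{\omega,\epsilon}$ is never trace class, and not even compact, for any choice of infinitely many cuts. Since $\widehat{\mathcal C}$ is strictly monotone, some off-diagonal coefficient $c_d$ with $d\neq0$ is nonzero. At every cut $x_k$ the difference keeps the entry coupling $x_k-|d|$ and $x_k$. That entry has modulus at least $|c_d|/4$, because $1+\epsilon\omega_n\geq 1/2$. Hence $\|(\mathcal C_{\omega,\epsilon}-\widetilde{\mathcal C}_{\omega,\epsilon})\delta_{x_k}\|\geq |c_d|/4$ while $\delta_{x_k}\rightharpoonup 0$, so compactness fails.

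The obstruction sits in the short-range entries next to each cut, not in the tail. Neither sparsity of the cuts nor a favourable configuration near them can remove it, so Kato--Rosenblum cannot be applied to the pair $(\mathcal C_{\omega,\epsilon},\widetilde{\mathcal C}_{\omega,\epsilon})$. Your second paragraph senses that the cut should only be measured ``at energies near $E$''. However, it never says to which pair of operators Kato--Rosenblum is then applied. The mechanism you invoke, Combes--Thomas decay in the gap plus a quantitative coupling rate, is also unavailable: Combes--Thomas needs exponential moments of the hopping, which a $|n|^{-1}\log^{-\beta}|n|$ tail does not have.

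\textbf{The missing idea is to decouple a smooth function of the operator instead of the operator itself; then no rate is needed.} Take a periodic configuration $\omega^{(E,\epsilon)}$ with values in $\operatorname{supp}\mu$ and an open interval $G\ni E$ with $\overline G\subset\rho(\mathcal V^{(E,\epsilon)}\mathcal C\mathcal V^{(E,\epsilon)})$, and let $f\in C_c^\infty(G;\mathbb R)$. The argument then runs as follows.
\begin{enumerate}
\item On a full-measure event independent of $E$, there are $n_j^\pm\to\pm\infty$ with $\tau^{n_j^\pm}\omega\to\omega^{(E,\epsilon)}$ pointwise. Your Borel--Cantelli recurrence suffices here.
\item Along these sites the shifted operators $A_j$ converge strongly to the gapped periodic operator.
\item $[P_0,A_j]=V_j[P_0,\mathcal C]V_j$ converges in $\mathfrak S_1$. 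Here $V_j$ is the correspondingly shifted multiplication operator $\mathcal V_{\omega,\epsilon}$. This uses your second-difference Hankel bound, which is correct and is exactly Lemma~\ref{lem:trace-class-cut}.
\item The Helffer--Sj\"ostrand formula and dominated convergence then give $\|P_{n_j^\pm}f(\mathcal C_{\omega,\epsilon})(I-P_{n_j^\pm})\|_1\to0$.
\item Extract a summable subsequence of cuts $a_k$ and set $F=f(\mathcal C_{\omega,\epsilon})$, $Q_k=P_{a_{k+1}}-P_{a_k}$. Then $F-\bigoplus_k Q_kFQ_k\in\mathfrak S_1$, and Kato--Rosenblum gives $P_{\mathrm{ac}}(F)=0$.
\item Choosing $f$ affine near $E$ transfers this to $\mathbbm 1_J(\mathcal C_{\omega,\epsilon})P_{\mathrm{ac}}(\mathcal C_{\omega,\epsilon})=0$ for an interval $J\ni E$.
\end{enumerate}

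\textbf{The gap-opening step is also left unexplained.} It does not follow from ``nondegenerate band structure''. It comes from the Jak\v{s}i\'c--Molchanov sparse additive-gap theorem, which is where Assumption~\ref{asmp_eigenvalue_monotone} enters. This is combined with the congruence $\mathcal V\mathcal C\mathcal V-E=\mathcal V(\mathcal C-E\mathcal V^{-2})\mathcal V$, choosing values $t_0,t_1$ in the support so that $E(I-\mathcal V^{-2})=xI+\eta q_p$. That construction needs $E\neq0$ and $0\in\operatorname{int}\operatorname{supp}\mu$.

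At $E=0$ no multiplicative configuration can open a gap, since $\mathcal V\mathcal C\mathcal V$ is invertible iff $\mathcal C$ is. So your reduction ``every $E$ in the almost-sure spectrum has a good neighbourhood'' must be replaced: cover $\mathbb R\setminus\{0\}$, and use that an absolutely continuous measure has no atom at $0$.
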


When the monotonicity of Assumption \ref{asmp_eigenvalue_monotone} holds only near a neighborhood of some $\kappa\in (-\pi,\pi)$, Theorem \ref{thm_absence_ac} holds locally in the corresponding spectral window; see Proposition~\ref{prop:local-no-ac}.

\section{Proof of Theorem \ref{thm_absence_ac}}
\label{sec_main_proof}

The proof is carried out in two steps. First, at each energy $E\in \mathbb R\setminus\{0\}$, we construct a periodic configuration $\mathcal{V}^{(E)}$ such that the periodic operator $\mathcal{V}^{(E)}\mathcal{C}\mathcal{V}^{(E)}$ has a spectral gap at $E$. This construction is based on the Bragg scattering mechanism, whose essential idea is summarized in Section \ref{sec:block-supported-gaps-from-regularity}, as rigorously established in \cite{jaksicmolchanov1999}. Next, using the ergodicity of our system, we find arbitrarily accurate approximations on arbitrarily large finite windows of that configuration in almost every realization. Intuitively speaking, this indicates that our system is `gapped at infinity' and hence prevents any ballistic-type propagation; this illustrates physically our main result, i.e., the almost-surely vanishing absolute continuous spectrum. 

To rigorously formulate this intuition, we prove that, for almost every realization $\omega$, a smooth function of the corresponding Hamiltonian $\mathcal{V}_{\omega,\epsilon}\mathcal{C}\mathcal{V}_{\omega,\epsilon}$ differs from a direct sum of finite-dimensional subsystems only by a trace-class perturbation (the proof is based on the gap condition of $\mathcal{V}^{(E)}\mathcal{C}\mathcal{V}^{(E)}$ and the second-order difference estimate in Assumption \ref{asmp_finite-difference}); hence, using the Kato-Rosenblum theorem (cf. \cite[Chapter X, Theorem 4.4]{kato1995perturbation}), we conclude that the absolute continuous spectrum of $\mathcal{V}_{\omega,\epsilon}\mathcal{C}\mathcal{V}_{\omega,\epsilon}$ is empty.

In the sequel, for $a\in\mathbb Z$, we let $P_a:=\mathbbm 1_{(-\infty,a)}(X)$ (i.e., the projection to the half chain), where $X$ denotes the position operator on $\ell^2(\mathbb Z)$ defined as $(Xu)_n = n u_n, $ with the domain given by $\mathcal{D}(X)= \{ u \in \ell^2(\mathbb Z): \sum_{n \in \mathbb Z} n^2 |u_n|^2 < + \infty\}$. 

\subsection{Periodic sparse-gap opening}
\label{sec:block-supported-gaps-from-regularity}

We use the scalar sparse-gap theorem of Jak\v{s}i\'c and Molchanov \cite[Theorem~4.1]{jaksicmolchanov1999}.  In the terminology of their paper, let
\begin{equation*}
q_p(n):=\mathbbm 1_{p\mathbb Z}(n)
\end{equation*}
be the periodic sequence that equals one on the sites $p\mathbb{Z}$ and zero otherwise, and
\[
e_{k,p}:=\widehat{\mathcal C}(k\pi/p)
\]
be the Floquet symbol sampling at $k\pi/p$. The sparse-gap theorem states that, on every compact subinterval of the interior of $\sigma(\mathcal C)$, there are $\eta_0>0$ and $p_0$ such that, whenever
$0<\eta<\eta_0$, $p>p_0$, and $e_{k,p}$ belong to that subinterval,
\begin{equation}  \label{eq:JM-sparse-additive-gap}
 \sigma(\mathcal C+\eta q_p)
 \cap(e_{k,p},e_{k,p}+\delta_{\eta,k,p})=\varnothing
\end{equation}
for some $\delta_{\eta,k,p}>0$.

In this section, we convert the additive sparse gap into the periodic configuration allowed by our multiplicative model.  The assumption that zero is an interior point of the single-site support makes the conversion exact.

\begin{proposition}
\label{prop:block-supported-periodic-gaps}
Let $K_{\mathrm{supp}}:=\operatorname{supp}(\mu(t)\,dt)$ be the support of the random variable. There exists $\epsilon_0$ such that, for every $0<\epsilon<\epsilon_0$ and every $E\in\mathbb R\setminus\{0\}$ there are an integer $p\geq 1$ and a $p$-periodic sequence $\omega^{(E,\epsilon)}:\mathbb Z\to K_{\mathrm{supp}}$ such that
\begin{equation}
 E\notin\text{Spec}(\mathcal C_{\omega^{(E,\epsilon)},\epsilon}).
 \label{eq:scalar-supported-periodic-gap}
\end{equation}
\end{proposition}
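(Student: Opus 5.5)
The plan is to remove the multiplicative structure by a resolvent identity, and then to apply either a trivial scaling argument or the sparse-gap theorem \eqref{eq:JM-sparse-additive-gap} to an additive problem. Write $\mathcal V=\mathcal V_{\omega,\epsilon}$. For any configuration with $1+\epsilon\omega_n$ bounded away from $0$, $\mathcal V$ is invertible, so
\[
\mathcal V\mathcal C\mathcal V-E=\mathcal V\big(\mathcal C-E\,\mathcal V^{-2}\big)\mathcal V ,
\]
and $E\notin\mathrm{Spec}(\mathcal C_{\omega,\epsilon})$ if and only if $0\notin\mathrm{Spec}(\mathcal C-E\mathcal V^{-2})$. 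The diagonal entries of $\mathcal V^{-2}$ are $(1+\epsilon\omega_n)^{-2}$. Choose $\epsilon_0$ so small that $1+\epsilon t>1/2$ on $K_{\mathrm{supp}}$. Fix $t_0>0$ with $[-t_0,t_0]\subset K_{\mathrm{supp}}$, which is possible because $0$ is an interior point of the support. Then, as $t$ ranges over $[-t_0,t_0]$, the factor $\lambda(t):=(1+\epsilon t)^{-2}$ sweeps a nondegenerate interval $\Lambda_\epsilon\ni 1$. By Assumption~\ref{asmp_eigenvalue_monotone}, $\sigma(\mathcal C)=\widehat{\mathcal C}([0,\pi])=[a,b]$.

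\emph{Case 1 (constant configurations).} Take $\omega_n\equiv t$, so the period is $p=1$. Then $\mathcal C-E\mathcal V^{-2}=\mathcal C-E\lambda(t)$, and $E$ lies in a gap whenever $E\lambda(t)\notin[a,b]$. Since $E\neq0$, $E\lambda(t)$ moves through the interval $E\Lambda_\epsilon$ of length at least $c_\epsilon|E|$. Hence there is $\gamma=\gamma(\epsilon,E)>0$ such that this case settles every $E$ with $E\Lambda_\epsilon\not\subset[a+\gamma,b-\gamma]$. For $E$ with $E\Lambda_\epsilon$ touching $\mathbb R\setminus(a+\gamma,b-\gamma)$, we pick $\lambda\in\Lambda_\epsilon$ pushing $E\lambda$ past the nearby edge. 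The subtlety is that $\gamma$ must not degenerate as $E$ varies. I would handle this by fixing $\gamma$ small depending only on $\epsilon$ and treating $|E|$ bounded below separately. For $|E|$ tiny, $0\in(a,b)$ forces $E\lambda$ into the interior, so this regime goes to Case 2 with the compact interval $[a+\gamma,b-\gamma]$, whose endpoints are independent of $E$.

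\emph{Case 2 (sparse gap).} Suppose $E\lambda\in J:=[a+\gamma,b-\gamma]$ for all $\lambda\in\Lambda_\epsilon$. Take the $p$-periodic configuration $\omega_n=t+s\,q_p(n)$, with $s$ defined by $(1+\epsilon(t+s))^{-2}=\lambda(t)(1+\eta')$. This gives
\[
\mathcal C-E\mathcal V^{-2}=\mathcal C-E\lambda-E\lambda\eta'\,q_p .
\]
We want $E\lambda$ to lie in the gap of $\mathcal C+\eta q_p$ with $\eta=-E\lambda\eta'$. If the required sign of $\eta$ is negative, we apply \eqref{eq:JM-sparse-additive-gap} to $-\mathcal C$, which yields the mirror gap $(e_{k,p}-\delta,e_{k,p})$, and choose the sign of $\eta'$ accordingly. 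Let $\eta_0,p_0$ be the constants of the sparse-gap theorem for a slightly enlarged compact subinterval $J'\Supset J$ of the interior of $\sigma(\mathcal C)$. Fix $0<|\eta|<\eta_0$ small enough that the resulting $s$ stays in $[-t_0,t_0]$; this is possible since $s=O(|\eta|/(\epsilon|E|))$. By continuity and monotonicity of $\widehat{\mathcal C}$, the sample points $e_{k,p}=\widehat{\mathcal C}(k\pi/p)$ become dense in $[a,b]$ as $p\to\infty$. Take $p>p_0$ so large that some $e_{k,p}\in J'$ lies inside the open interval $E\cdot\mathrm{int}\,\Lambda_\epsilon$, strictly away from its endpoints.

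Now $\delta=\delta_{\eta,k,p}>0$ is fixed. By continuity in $t$, choose $t$ with $E\lambda(t)\in(e_{k,p},e_{k,p}+\delta)$, and then set $\eta'=-\eta/(E\lambda(t))$ and solve for $s$. Then $0\notin\mathrm{Spec}(\mathcal C+\eta q_p-E\lambda)$, which gives \eqref{eq:scalar-supported-periodic-gap}.

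The main obstacle is the order of quantifiers in Case 2: $\delta$ depends on $(\eta,k,p)$, so $\lambda$ has to be tuned \emph{after} $\eta,k,p$ are fixed, while keeping $s$ and $t$ inside the support. Re-solving for $\eta'$ after $t$ is chosen, with $\eta$ held fixed, is exactly what makes this consistent. The other delicate point is choosing $\gamma$, and hence $J$, so that Cases 1 and 2 together cover all $E\neq0$ uniformly, with $J$ a genuine compact subinterval of the interior of $\sigma(\mathcal C)$.
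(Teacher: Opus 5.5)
Your Case 2 mechanism is the paper's construction in different coordinates. A constant background value $t$ off $p\mathbb Z$ and a comb value $t+s$ on $p\mathbb Z$ turn $\mathcal C-E\mathcal V^{-2}$ into $\mathcal C-E+xI+\eta q_p$ with $x=E(1-\lambda(t))$. Your order of choices ($\eta$, then $k,p$, then $\delta$, then the background shift) is also the paper's.

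The gap is in the case analysis feeding into it. It rests on a uniformity in $E$ that the statement does not require: only $\epsilon_0$ must be uniform, and it comes from $1+\epsilon t\ge 1/2$ alone, while $p$, $\eta$ and the energy window may depend on $E$. As written, the two cases do not cover all $E\neq0$:
\begin{itemize}
\item The claim ``$0\in(a,b)$'' is not implied by Assumption~\ref{asmp_eigenvalue_monotone}. It is false for the reflection-symmetric capacitance symbol, whose spectrum is $[0,c(\pi)]$, i.e.\ $a=c(0)=0$.
\item For small $E>0$ there, every $t\in K_{\mathrm{supp}}$ gives $0<E\lambda(t)\le 4E<b$, so no constant configuration opens a gap. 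Yet $E\Lambda_\epsilon\not\subset[\gamma,b-\gamma]$ once $E$ is small relative to your fixed $\gamma$, so your Case 2 hypothesis fails as well. This is exactly the acoustic window used in Section~\ref{sec_cap}.
\item ``Pushing $E\lambda$ past the nearby edge'' likewise fails whenever $E\Lambda_\epsilon\subset[a,b]$ but comes within $\gamma$ of $a$ or $b$.
\end{itemize}

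The fix is to drop the uniformity. For each fixed $E$:
\begin{itemize}
\item If $E\Lambda_\epsilon\not\subset[a,b]$, Case 1 applies.
\item If $E\Lambda_\epsilon\subset[a,b]$, the nondegenerate open interval $E\,\mathrm{int}\,\Lambda_\epsilon$ lies in $(a,b)$. Take $J'$ to be any compact subinterval of it. Your Case 2 then goes through verbatim, since $e_{k,p}\in J'$ is automatically interior to $E\Lambda_\epsilon$.
\end{itemize}
The paper does the same thing. It uses $\omega\equiv0$ when $E\notin\sigma(\mathcal C)$, and otherwise picks an $E$-dependent compact window $\mathcal J_E$ in the interior of $\sigma(\mathcal C)$ close to $E$. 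The later argument only needs the event $\Omega_*$ to be independent of $E$, not the window.

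Two smaller points. First, applying the sparse-gap theorem to $-\mathcal C$ is unnecessary. Since $t$ is interior to the support, $\eta'$ can take either sign, so you can always take $\eta>0$. The paper gets this by noting that $\Phi_{E,\epsilon}(t)=E(1-(1+\epsilon t)^{-2})$ maps a neighbourhood of $0$ onto a two-sided neighbourhood of $0$. Second, you need $t+s\in K_{\mathrm{supp}}$, not merely $|s|\le t_0$. Either shrink $t_0$ so that $[-2t_0,2t_0]\subset K_{\mathrm{supp}}$, or do as the paper does: choose both site values directly as preimages under $\Phi_{E,\epsilon}$ of $x$ and $x+\eta$, with both targets kept inside that covered neighbourhood.
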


Rather than reproducing the proof, we would like to emphasize the physical idea underlying \cite[Theorem~4.1]{jaksicmolchanov1999}, i.e., the Bragg scattering mechanism. Essentially, in the Bragg scattering process, a periodic potential of length $p$ mixes two counter-propagating Bloch modes with quasimomenta separated by $\frac{2\pi}{p}$. The upshot of such mixing is that, when the profile of the perturbation potential is appropriately designed, the two counter-propagating channels, which are referred to as resonant channels in the coupled two-mode description, give rise to two stationary modes \textit{with energy lifted from the original level $E$}. When the strength of the perturbation is equal to $\eta$, the energy fluctuation of the resonant modes is of the order 
$$\mathcal{O}(\eta).$$
On the other hand, since the unit period of the potential has length $p$, it samples Bloch modes with spacing roughly being $\mathcal{O}(p^{-1})$. This, together with the assumption that the counter-propagating Bloch modes have non-vanishing group velocity, implies that the non-resonant Bloch eigenvalues are separated from $E$ by the order of
\begin{equation*}
\mathcal{O}(p^{-1}).
\end{equation*}
Thus, we can construct a weak perturbation potential with a long period by controlling the two scales $p$ and $\eta$ (see the statement of \eqref{eq:JM-sparse-additive-gap}) so that the original energy level $E$ is not contained in the spectrum of the perturbed Hamiltonian. We refer the reader to \cite[Section 4]{jaksicmolchanov1999} for more details.

Note that Assumption \ref{asmp_eigenvalue_monotone} is precisely the hypothesis denoted by (H) in
\cite[Section 4]{jaksicmolchanov1999}, which is applied to prove Theorem 4.1 therein. Importantly, we point out that the absence of absolute continuous spectrum theorem in \cite{jaksicmolchanov1999}, as one of its main results, assumes an off-diagonal decay of Hamiltonians faster than $|n|^{-4}$, which obviously does not apply to the critical capacitance tail studied in this paper. In fact, we have only imported their sparse-gap theorem (which does not require a quartic off-diagonal decay rate) and overcome the requirement on the rapid decay rate by introducing and exploiting the finite-difference estimate stated in Assumption \ref{asmp_finite-difference}.

\begin{proof}[Proof of Proposition \ref{prop:block-supported-periodic-gaps}]

{\color{blue}Preparations.}
Since $K_{\mathrm{supp}}$ is compact, choose $\epsilon_0>0$ so that
\begin{equation} \label{eq:scalar-uniform-positive-weights}
 1+\epsilon t\geq\frac12 ,
 \quad \forall t\in K_{\mathrm{supp}},\, 0<\epsilon<\epsilon_0.
\end{equation}
Fix such an $\epsilon$ and $E\neq0$. If $E\notin\sigma(\mathcal C)$, the constant target $\omega^{(E,\epsilon)}\equiv0$ proves the assertion. We may therefore assume that $E\in\sigma(\mathcal C)$.

For $t$ near zero, we introduce the function
\begin{equation} \label{eq:scalar-fixed-energy-map}
 \Phi_{E,\epsilon}(t)
 :=E\left(1-(1+\epsilon t)^{-2}\right).
\end{equation}
Since $0\in\operatorname{int}K_{\mathrm{supp}}$ and $\Phi_{E,\epsilon}^{\prime}(0)=2E\epsilon\neq0$, there are $r>0$ and $a>0$ such that
\begin{equation} \label{eq:scalar-fixed-energy-image}
 (-r,r)\subset K_{\mathrm{supp}},
 \quad
 (-a,a)\subset\Phi_{E,\epsilon}((-r,r)).
\end{equation}
Choose a nonempty compact interval
\begin{equation}
 \mathcal J_E\Subset
 \sigma(\mathcal C)\cap(E-a/4,E+a/4).
 \label{eq:scalar-interior-energy-window}
\end{equation}
The existence of such an interval follows clearly by the band structure of $\sigma(\mathcal C)$. Let $\eta_0$ and $p_0$ be the constants supplied by \eqref{eq:JM-sparse-additive-gap} (i.e., the additive sparse-gap theorem) for $\mathcal J_E$. We further shrink the strength of potential $\eta$ by restricting that
\begin{equation}
 0<\eta<\min\{\eta_0,a/4\},
 \label{eq:scalar-comb-amplitude}
\end{equation}
and then $p>p_0$ and $k$ so that $e=e_{k,p}\in\mathcal J_E$; such a choice is possible because the symbol $\hat{\mathcal{C}}(\kappa)$ is continuous and the grids $\{k\pi/p:0\leq k\leq p\}$ become dense in $[0,\pi]$ as $p\to\infty$. 

{\color{blue}Construction of $\omega^{(E,\epsilon)}$.} With these ingredients, we now construct explicitly the periodic perturbation. Let $\delta=\delta_{\eta,k,p}$, which is also supplied by \eqref{eq:JM-sparse-additive-gap}. Select
\begin{equation}
 0<s<\min\{\delta,a/4\},
 \quad x:=E-e-s.
 \label{eq:scalar-gap-translation}
\end{equation}
Then $|x|<a/2$ and $|x+\eta|<3a/4$. By \eqref{eq:scalar-fixed-energy-image}, there are $t_0,t_1\in(-r,r)\subset K_{\mathrm{supp}}$ such that
\begin{equation}
 \Phi_{E,\epsilon}(t_0)=x,
 \quad
 \Phi_{E,\epsilon}(t_1)=x+\eta.
 \label{eq:scalar-two-supported-values}
\end{equation}
The desired $p$-periodic configuration is defined as
\begin{equation}
 \omega^{(E,\epsilon)}(n):=
 \begin{cases}
  t_1,&n\in p\mathbb Z,\\
  t_0,&n\notin p\mathbb Z.
 \end{cases}
 \label{eq:scalar-supported-sparse-target}
\end{equation}
and let $\mathcal V^{(E,\epsilon)}$ be the multiplication operator by
$1+\epsilon\omega^{(E,\epsilon)}(n)$. Then, using \eqref{eq:scalar-fixed-energy-map} and \eqref{eq:scalar-two-supported-values}, one can directly check that
\begin{equation*}
E\left(I-(\mathcal V^{(E,\epsilon)})^{-2}\right)
 =xI+\eta q_p.
\end{equation*}
Consequently,
\begin{equation} \label{eq:scalar-supported-gap-congruence}
\begin{aligned}
\mathcal V^{(E,\epsilon)}\mathcal C
 \mathcal V^{(E,\epsilon)}-E
 &=\mathcal V^{(E,\epsilon)}
 \left(\mathcal C-E(\mathcal V^{(E,\epsilon)})^{-2}\right)
 \mathcal V^{(E,\epsilon)} \\
 &=\mathcal V^{(E,\epsilon)}
 \left(\mathcal C+xI+\eta q_p-E\right)
 \mathcal V^{(E,\epsilon)}.
\end{aligned}
\end{equation}
Note that the spectrum of $\mathcal C+xI+\eta q_p$ has a gap $(e+x,e+\delta+x)=(E-s,E+\delta-s)$ which contains $E$. Hence, the middle factor in \eqref{eq:scalar-supported-gap-congruence} is invertible. Since the conjugation operator $V^{(E,\epsilon)}$ is also invertible by \eqref{eq:scalar-uniform-positive-weights}, we complete the proof of \eqref{eq:scalar-supported-periodic-gap}.
\end{proof}

\begin{remark} \label{rmk:extension_to_d_dim}
We note that the sparse-gap theorem is expected to be extended to the case where the Floquet symbol $\hat{\mathcal{C}}$ is matrix-valued (i.e., there are several resonators in a single block) and all branches of Bloch eigenvalues are regular near the energy level of interest $E$. In particular, the construction of a sparse gap in this case requires one to realize the Bragg scattering mechanism among $2N$ resonant channels, rather than only two. With this general result, the absence of absolute continuous spectrum theorem of this work can be generalized to periodic operators with matrix-valued symbols. We leave this interesting extension to future work.
\end{remark}
 
The periodic-gap construction is now complete, and the remaining proof of Theorem \ref{thm_absence_ac} no longer uses the regularity of the Floquet symbol. In the sequel, we show that the gap-opening profile obtained above appears as a two-sided right limit of almost every random configuration. Then, using such 'gapped at infinity' result, we obtain a sequence of cuts along the chain across which a localized spectral function has arbitrarily small trace-norm coupling.

\subsection{Trace-class spatial cut} \label{sec_trace_class_cut}

A periodic gap can act as a spectral barrier only if cutting the original long-range chain gives rise to a trace-class cost. This is justified in the following lemma using the decay estimate of the second-order finite difference of the kernel, as assumed in Assumption \ref{asmp_finite-difference}.

\begin{lemma}[Trace-class cost of a spatial cut]
\label{lem:trace-class-cut}
Let $P_0:=\mathbbm{1}_{n <0} (X)$ be the orthogonal projection onto the negative half-chain. Under Assumptions~\ref{asmp_decay} and \ref{asmp_finite-difference}, we have
\begin{equation}
 [P_0,\mathcal C]\in\mathfrak S_1(H),
 \label{eq:trace-class-cut}
\end{equation}
where $\mathfrak S_1(H)$ denotes the trace-class operators on $H$.
\end{lemma}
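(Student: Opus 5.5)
The plan is to reduce \eqref{eq:trace-class-cut} to a trace-class statement for a one-sided Hankel matrix, and then to verify Peller's criterion for that matrix using the second-order difference bound in Assumption~\ref{asmp_finite-difference}.

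\emph{Reduction to a Hankel matrix.} Write $Q_0:=I-P_0$. Then
\[
[P_0,\mathcal C]=P_0\mathcal C Q_0-Q_0\mathcal C P_0 ,
\]
and $Q_0\mathcal CP_0=(P_0\mathcal CQ_0)^*$ because $\mathcal C$ is self-adjoint. It therefore suffices to show $A:=P_0\mathcal CQ_0\in\mathfrak S_1(H)$. For $n<0\le m$, periodicity gives $A_{n,m}=\mathcal C_{n,m}=\mathcal C_{0,m-n}=c_{m-n}$. Set $i:=-n-1\ge0$ and $j:=m\ge0$. Up to unitary relabelling, $A$ becomes the Hankel matrix $\Gamma=(a_{i+j})_{i,j\ge0}$ with $a_k:=c_{k+1}$.

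A crude bound is not enough here. Bounding $\|\Gamma\|_{\mathfrak S_1}$ by $\sum_k(k+1)|a_k|$, or by a sum of row norms, diverges under Assumption~\ref{asmp_decay} alone, since $\sum_k (\log k)^{-\beta}=\infty$. Likewise, summation by parts into triangular blocks $\mathbbm 1_{i+j\le r}$, whose trace norm is of order $r\log r$, loses a logarithm and works only for $\beta>2$. So I will use the sharp criterion instead.

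\emph{Peller's criterion.} By Peller's theorem, $\Gamma\in\mathfrak S_1$ if and only if $\varphi(z)=\sum_{k\ge0}a_kz^k$ lies in the Besov space $B^1_{1,1}$. Concretely, fix a smooth dyadic partition of unity $\{w_N\}_{N\ge0}$ on $[0,\infty)$ with $w_N$ supported in $[2^{N-1},2^{N+1}]$ for $N\ge1$. Set
\[
\varphi_N(\theta):=\sum_k w_N(k)\,a_k\,e^{ik\theta}.
\]
The criterion then reads $\sum_N 2^N\|\varphi_N\|_{L^1(\mathbb T)}<\infty$.

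\emph{Bounding the dyadic pieces.} To bound $\|\varphi_N\|_{L^1}$, I sum by parts twice. For a finitely supported sequence $b$,
\[
\sum_k b_k e^{ik\theta}=\sum_k \Delta^{(2)}b_k\,(k+1)F_k(\theta),
\]
up to index conventions, where $F_k$ is the Fejér kernel. Since $\|F_k\|_{L^1}=1$, this gives $\|\sum_k b_ke^{ik\theta}\|_{L^1}\le\sum_k(k+1)|\Delta^{(2)}b_k|$.

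Apply this with $b=w_Na$. By the discrete Leibniz rule, $\Delta^{(2)}(w_Na)$ is a combination of three types of terms:
\begin{itemize}
\item $w_N\,\Delta^{(2)}a$, of size $\lesssim 2^{-3N}N^{-\beta}$ by Assumption~\ref{asmp_finite-difference} with $j=2$;
\item $\Delta w_N\,\Delta a$, of size $\lesssim 2^{-N}\cdot 2^{-2N}N^{-\beta}$ by Assumption~\ref{asmp_finite-difference} with $j=1$;
\item $\Delta^{(2)}w_N\,a$, of size $\lesssim 2^{-2N}\cdot 2^{-N}N^{-\beta}$ by Assumption~\ref{asmp_decay}.
\end{itemize}
All three are $O(2^{-3N}N^{-\beta})$ on a set of $O(2^N)$ indices with $k+1\lesssim 2^N$. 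Hence
\[
\|\varphi_N\|_{L^1}\lesssim 2^N\cdot 2^N\cdot 2^{-3N}N^{-\beta}=2^{-N}N^{-\beta}.
\]
Consequently $\sum_N 2^N\|\varphi_N\|_{L^1}\lesssim\sum_N N^{-\beta}<\infty$ precisely because $\beta>1$. This gives $\Gamma\in\mathfrak S_1$, and hence \eqref{eq:trace-class-cut}.

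\emph{Main obstacle.} The key difficulty is choosing a decomposition that does not lose a logarithm. Both the direct summation by parts and a naive dyadic block decomposition with Taylor remainders fail at the diagonal blocks $i\sim j$. The Besov/Fejér route handles these blocks through the $L^1$-boundedness of the Fejér kernel.

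If one prefers not to cite Peller's theorem as a black box, I would use its elementary sufficient direction, $\|\Gamma_\varphi\|_{\mathfrak S_1}\lesssim\sum_N 2^N\|\varphi_N\|_{L^1}$. This follows by writing each $\Gamma_{\varphi_N}$ as an average over $\theta$ of rank-one Hankel matrices built from the vectors $(e^{ik\theta})_{k\lesssim 2^{N+1}}$, each of trace norm $\lesssim 2^{N}$, weighted by $|\varphi_N(\theta)|\,d\theta$. The factor $2^N$ comes from restricting these vectors to the frequency window $[0,2^{N+1}]$ in each of the two indices.
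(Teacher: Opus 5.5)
Your reduction to the Hankel matrix and your overall strategy, Peller's criterion together with the rank-one averaging proof of its easy direction, are sound. The dyadic $L^1$ estimate, however, is justified incorrectly, and this is exactly the step you credit with avoiding the logarithm.

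\textbf{The gap.} For one-sided sequences, summing by parts twice does not produce the Fejér kernel. With $D_k(\theta)=\sum_{m=0}^{k}e^{im\theta}$ one gets $\sum_k b_ke^{ik\theta}=\sum_k\Delta^{(2)}b_k\,S_k(\theta)$, where $S_k(\theta)=\sum_{m=0}^{k}(k+1-m)e^{im\theta}$. This $S_k$ is the analytic (Riesz) projection of $(k+1)F_k$, not $(k+1)F_k$ itself. Since $(1-e^{i\theta})S_k(\theta)=(k+1)-e^{i\theta}D_k(\theta)$, you have $|S_k(\theta)|\gtrsim k/|\theta|$ for $C/k\le|\theta|\le\pi$. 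Hence $\|S_k\|_{L^1}\asymp(k+1)\log(k+2)$.

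Your general inequality $\|\sum_k b_ke^{ik\theta}\|_{L^1}\le\sum_k(k+1)|\Delta^{(2)}b_k|$ is therefore false. Take $b_k=(m+1-k)_+$: then $\Delta^{(2)}b=\delta_m$, the left side is $\|S_m\|_{L^1}\asymp m\log m$, and the right side is $m+1$. If you run your argument with the correct kernel, you get $\|\varphi_N\|_{L^1}\lesssim 2^{-N}N^{1-\beta}$. Summing gives $\sum_N N^{1-\beta}$, which requires $\beta>2$. That is precisely the logarithmic loss you set out to avoid. The $L^1$-boundedness of the Fejér kernel is a cosine-series (even sequence) phenomenon; the conjugate part is what costs the logarithm.

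\textbf{How to repair it.} The bound $\|\varphi_N\|_{L^1}\lesssim 2^{-N}N^{-\beta}$ is nevertheless true, but for a different reason. The sequence $b=w_Na$ vanishes outside $[2^{N-1},2^{N+1}]$, so $\sum_m\Delta^{(2)}b_m=\sum_m m\,\Delta^{(2)}b_m=0$, and the non-integrable part $(k+1)/(1-e^{i\theta})$ of $S_k$ cancels. Concretely, use $(e^{-i\theta}-1)^2\widehat b(\theta)=\widehat{\Delta^{(2)}b}(\theta)$ and split the integral at $|\theta|=2^{-N}$. This gives $\|\varphi_N\|_{L^1}\lesssim 2^{-N}\|w_Na\|_{\ell^1}+2^{N}\|\Delta^{(2)}(w_Na)\|_{\ell^1}\lesssim 2^{-N}N^{-\beta}$. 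A Cauchy--Schwarz variant with weights $(1+2^{2N}\theta^2)^{\pm1/2}$ needs only $\Delta(w_Na)$. So, once repaired, your Peller/Besov route proves the lemma using only Assumption~\ref{asmp_decay} and the $j=1$ case of Assumption~\ref{asmp_finite-difference}. That is a genuine advantage of your approach over the paper's.

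\textbf{Comparison with the paper.} You dismissed direct summation by parts too quickly. The paper sums by parts twice, writing $\Gamma=\sum_N\Delta^{(2)}a_N\,B_N$ with $(B_N)_{jk}=(N+1-j-k)_+$. These second-order blocks do not lose a logarithm. The reason is the factorization $B_N=U_NT_N$, where $U_N=(\mathbbm 1_{j\le m\le N})_{j,m\ge0}$ and $T_N=(\mathbbm 1_{m+k\le N})_{m,k\ge0}$. Each factor has Hilbert--Schmidt norm $\sqrt{(N+1)(N+2)/2}$, so $\|B_N\|_1\le(N+1)(N+2)/2$. This yields the lemma for every $\beta>1$ with no Besov-space machinery, at the price of using the second-difference assumption.
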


\begin{proof}
Let $Q_0:=I-P_0$ and decompose
\[
H=P_0H\oplus Q_0H
 =\ell^2(\mathbb Z_{<0})\oplus\ell^2(\mathbb Z_{\geq0}).
\]
Then
\begin{equation} \label{eq:commutator_block_form}
[P_0,\mathcal C]
 =
\begin{pmatrix}
0&P_0\mathcal C Q_0\\
-Q_0\mathcal C P_0&0
\end{pmatrix}.
\end{equation}
Under the identification of $\ell^2(\mathbb Z_{<0})$ with $\ell^2(\mathbb Z_{>0})$, the cross
block $Q_0\mathcal C P_0$ is represented by the Hankel matrix
\[
\Gamma:=(a_{j+k})_{j,k\geq0},\quad
a_n:=c_{-(n+1)},
\]
and the other cross block is $\Gamma^*$. Hence, it suffices to prove that $\Gamma$ is trace-class. Note that Assumptions \ref{asmp_decay} and \ref{asmp_finite-difference} imply $a_N\to0$ and $\Delta a_N\to0$. Consequently, we can apply the summation by parts twice and obtain
\begin{equation}
 a_n=\sum_{N=n}^{\infty}(N-n+1)\Delta^{(2)} a_N.
 \label{eq:twice-summed-difference}
\end{equation}
For $N\geq0$, define the matrix
\[
 (B_N)_{jk}:=(N-j-k+1)_+,
 \quad j,k\geq0.
\]
Then $\Gamma$ is factorized as
\begin{equation}
 \Gamma=\sum_{N=0}^{\infty}(\Delta^{(2)} a_N)B_N.
 \label{eq:Hankel-second-difference-expansion}
\end{equation}
In particular, the trace norm of $B_N$ is estimated elementarily as
\begin{equation}
 \|B_N\|_1\leq\frac{(N+1)(N+2)}2.
 \label{eq:BN-trace-norm}
\end{equation}
Using this estimate, we conclude that
\begin{align*}
 \|\Gamma\|_1
 &\leq\frac12\sum_{N=0}^{\infty}(N+1)(N+2)
 |\Delta^{(2)} a_N|\\
 &\leq C\sum_{N=1}^{\infty}
 \frac{1}{N\log^\beta(e+N)}<\infty.
\end{align*}
Thus, $[P_0,\mathcal C]$ is a trace class operator.
\end{proof}

\subsection{Almost-sure recurrence of periodic potentials}

Let $\Omega:=K_{\mathrm{supp}}^{\mathbb Z}$ be the configuration space of random variable, equipped with its product topology and product probability measure, and let $\tau$ be the left-shifting operator defined as $(\tau\omega)(n):=\omega(n+1)$.

The deterministic periodic gap-opening potential found in Proposition \ref{prop:block-supported-periodic-gaps} affects the random spectrum only if its profile reappears arbitrarily far to both sides of a typical random configuration. This recurrence actually emerges for almost every configuration, due to the ergodicity of our system.

\begin{lemma}[Two-sided recurrence of every finite pattern]
\label{lem:dense-two-sided-orbit}
There exists an event $\Omega_*$ of probability one such that, for every
$\omega\in\Omega_*$, both semi-orbits $\{\tau^n\omega:n\geq0\}$ and $\{\tau^{-n}\omega:n\geq0\}$ are dense in $\Omega$. In particular, both semi-orbits visit every nonempty relatively open finite-coordinate cylinder infinitely often.
\end{lemma}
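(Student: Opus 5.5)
The plan is to use that $\Omega=K_{\mathrm{supp}}^{\mathbb Z}$, with the product topology, is a compact metrizable space. It has a countable base $\{U_j\}_{j\geq1}$ made of nonempty finite-coordinate cylinders. Concretely, we take sets of the form $\{\omega:\omega(n)\in I_n,\ |n|\le L\}$, where the $I_n$ run over relatively open subsets of $K_{\mathrm{supp}}$ with rational endpoints that meet $K_{\mathrm{supp}}$ in a set of positive $\mu$-measure. One point to check here: every nonempty relatively open subset of the support has positive $\mu(t)\,dt$-measure, by the definition of support. Hence each basic cylinder $U_j$ has positive product probability $\mathbb P(U_j)=\prod_{|n|\le L}\mu(I_n)>0$.

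Next comes the probabilistic step. For a fixed cylinder $U$ depending on coordinates in $[-L,L]$, consider the events $A_m:=\{\tau^{m}\omega\in U\}$ for $m\in(2L+1)\mathbb N$. They depend on pairwise disjoint blocks of coordinates, so they are independent, and each has probability $\mathbb P(U)>0$. By the second Borel--Cantelli lemma, $A_m$ occurs for infinitely many $m\geq0$ almost surely. The same argument with $m\in-(2L+1)\mathbb N$ handles the backward semi-orbit. Alternatively, one can invoke ergodicity of the Bernoulli shift together with Birkhoff's theorem applied to $\tau$ and $\tau^{-1}$, which gives visit frequency $\mathbb P(U)>0$. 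The i.i.d.\ argument is more self-contained, so I would use it. We then set $\Omega_*:=\bigcap_j\bigl(\{\tau^m\omega\in U_j\text{ i.o. }m\to+\infty\}\cap\{\tau^{m}\omega\in U_j\text{ i.o. }m\to-\infty\}\bigr)$. This is a countable intersection of probability-one events, so $\mathbb P(\Omega_*)=1$.

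Finally, we deduce density. Any nonempty open set of $\Omega$ contains some basic $U_j$, so each semi-orbit of $\omega\in\Omega_*$ meets it, in fact infinitely often; hence both semi-orbits are dense. Now let $U$ be any nonempty relatively open finite-coordinate cylinder. It contains some $U_j$, so it too is visited infinitely often. The only mild obstacle is the measure-theoretic detail that nonempty relatively open cylinders actually carry positive mass. This holds precisely because $\Omega$ is built on the support of $\mu$ rather than on all of $\mathbb R$; I would state it explicitly via the definition of the support.
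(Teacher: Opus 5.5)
Your proof is correct, and its skeleton matches the paper's. Both arguments take a countable base of rational finite-coordinate cylinders, note that each has positive probability because nonempty relatively open subsets of $K_{\mathrm{supp}}$ carry positive mass, show that both semi-orbits almost surely visit each basic cylinder infinitely often, and then take a countable intersection.

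The difference lies in how you get the infinitely many visits. The paper uses ergodicity of the Bernoulli shift and its inverse, and applies Birkhoff's theorem to the indicator of each cylinder. You instead restrict to shifts $m\in\pm(2L+1)\mathbb N$. For these, the events $\{\tau^m\omega\in U\}$ depend on disjoint coordinate blocks, so they are independent with common probability $\mathbb P(U)>0$, and the second Borel--Cantelli lemma applies.

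Your route is more elementary and self-contained: it needs only the product structure, not the ergodic theorem. The paper's route gives more, namely that visits occur with positive asymptotic frequency $\mathbb P(U)$. It would also carry over unchanged to any ergodic shift-invariant disorder that charges every nonempty open cylinder, a setting where independence of disjoint blocks is no longer available.

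Either version yields what is used later in the paper: shifts $n_j^\pm\to\pm\infty$ with $\tau^{n_j^\pm}\omega\to\omega^{(E,\epsilon)}$.
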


\begin{proof}
First, we note that the product space $\Omega$ has a countable basis
of nonempty relatively open finite-coordinate cylinders defined by intervals with rational endpoints. Each such cylinder has a positive probability, since every nonempty relatively open subset of the single-site support has a positive measure. The Bernoulli shift and its inverse are ergodic. As a consequence, by the Birkhoff ergodic theorem applied to the indicator of each basic cylinder, we know that the two semi-orbits visit any given rational cylinder $\mathcal{S}$ with probability one infinitely often. Let us record this event as $\Omega_{\mathcal{S}}$. Finally, taking the countable intersection $\Omega_{*}:=\cap_{\mathcal{S}}\Omega_{\mathcal{S}}$ completes the proof.
\end{proof}

\subsection{Almost-decoupling cut}

The main result of this subsection states that, for smooth spectral functions $f(\mathcal C_{\omega,\epsilon})$ with $f$ supported near the energy level $E$ (recall that $E$ lies in the sparse-gap as proved in Proposition \ref{prop:block-supported-periodic-gaps}), the cost of spatial cut has a vanishing trace norm as the cut moves to infinity.

\begin{proposition} \label{prop:cost_spatial_cut}
Suppose that Assumptions~\ref{asmp_decay} and \ref{asmp_finite-difference} hold. Fix $\omega\in\Omega_*$, which is the event of probability one found in Lemma \ref{lem:dense-two-sided-orbit}, and $\epsilon\in (0,\epsilon_0)$. Let $\omega^{(E,\epsilon)}\in K_{\mathrm{supp}}^{\mathbb Z}$ be the periodic configuration found in Proposition \ref{prop:block-supported-periodic-gaps}, and $G$ be a bounded open interval containing $E$ with $\overline G\subset\rho(\mathcal C_{\omega^{(E,\epsilon)},\epsilon})$. Then there are sequences
$n_j^+\to+\infty$ and $n_j^-\to-\infty$ such that, for every
$f\in C_c^\infty(G;\mathbb R)$,
\begin{equation} \label{eq:good-cuts}
 \|P_{n_j^\pm}f(\mathcal C_{\omega,\epsilon})
 (I-P_{n_j^\pm})\|_1\longrightarrow0.
\end{equation}

\end{proposition}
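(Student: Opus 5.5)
Our plan is to express $f(\mathcal C_{\omega,\epsilon})$ through the Helffer--Sjöstrand formula, split the cut operator into a commutator piece and a resolvent piece, and control the resolvent piece by the spectral gap of the periodic configuration, which by Lemma~\ref{lem:dense-two-sided-orbit} reappears near the cut.

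\textbf{Step 1: reduction.} Write
\[
f(\mathcal C_{\omega,\epsilon})=\frac1\pi\int_{\mathbb C}\bar\partial\tilde f(z)\,(\mathcal C_{\omega,\epsilon}-z)^{-1}\,dA(z),
\]
where $\tilde f$ is an almost-analytic extension supported near $\overline G$ with $|\bar\partial\tilde f(z)|\le C_N|\operatorname{Im}z|^N$. Fix an open $G'$ with $\overline G\subset G'$ and $\overline{G'}\subset\rho(\mathcal C_{\omega^{(E,\epsilon)},\epsilon})$, and choose $\tilde f$ supported in $\{\operatorname{Re}z\in G',\ |\operatorname{Im}z|<1\}$. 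Since $P_a$ is diagonal, $[P_a,\mathcal V_{\omega,\epsilon}]=0$. Hence
\[
[P_a,\mathcal C_{\omega,\epsilon}]=\mathcal V_{\omega,\epsilon}[P_a,\mathcal C]\mathcal V_{\omega,\epsilon},
\]
and by translation invariance together with Lemma~\ref{lem:trace-class-cut}, this is trace class with norm bounded uniformly in $a$. We use the identity $P f Q=[P,f]Q$ with $Q=I-P$, and write $R(z)=(\mathcal C_{\omega,\epsilon}-z)^{-1}$. Then
\[
P_afQ_a=-\frac1\pi\int\bar\partial\tilde f\;P_aR(z)\,\mathcal V[P_a,\mathcal C]\mathcal V\,R(z)\,Q_a\,dA .
\]
The naive bound $\|R\|^2\le|\operatorname{Im}z|^{-2}$ only shows that this is trace class, not that it is small. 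For smallness we would split $[P_a,\mathcal C]$ into a localized part and a tail. The tail is $\Gamma$ restricted to $j+k\ge L$. By the estimate $\sum_N N^2|\Delta^{(2)}a_N|<\infty$ from the proof of Lemma~\ref{lem:trace-class-cut}, its trace norm tends to $0$ as $L\to\infty$ uniformly in $a$, which follows by applying the $B_N$ expansion to the truncated tail. The localized part is a finite-rank piece $F_{a,L}$ supported in $[a-L,a+L]^2$, and it is bounded in trace norm.

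\textbf{Step 2: the gap controls the localized part.} For the localized part it suffices to show
\[
\|\mathbbm 1_{[a-L,a+L]}R(z)\|\ \text{ small }\cdot|\operatorname{Im}z|^{-1}
\]
in a suitable averaged sense, or more simply to replace $R(z)$ near $a$ by the periodic resolvent. By Lemma~\ref{lem:dense-two-sided-orbit}, choose $n_j^\pm$ so that $\tau^{n_j^\pm}\omega$ agrees with $\omega^{(E,\epsilon)}$ within $1/j$ on the window $[-j,j]$. Let $R_0(z)$ be the resolvent of $\mathcal C_{\omega^{(E,\epsilon)},\epsilon}$, translated by $n_j^\pm$. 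For $\operatorname{Re}z\in G'$ it is uniformly bounded and analytic, even across the real axis.

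We then apply the second resolvent identity:
\[
R=R_0+R\,(\mathcal C_0-\mathcal C_{\omega,\epsilon})\,R_0 .
\]
Here $\mathcal C_0-\mathcal C_{\omega,\epsilon}$ is small on $[n_j-j,n_j+j]^2$ and is arbitrary off that window. Because $R_0$ is uniformly bounded in a complex neighbourhood of $\overline{G'}$, a Combes--Thomas type estimate is not available for polynomially decaying kernels. However, uniform boundedness plus summable off-diagonal decay still gives $\|\mathbbm 1_{|n-a|>j}R_0\mathbbm 1_{[a-L,a+L]}\|\to0$ as $j\to\infty$. We would prove this via a commutator argument with a smooth bounded cutoff $\chi((X-a)/j)$. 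The commutator $[\chi,\mathcal C]$ has norm $\lesssim\sum_n|c_n|\min(1,|n|/j)\to0$ by dominated convergence.

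\textbf{Step 3: assembly.} The term involving $F_{a,L}R_0$ has an integrand that is analytic in $z$ near $\operatorname{supp}\tilde f$, after the bounded-operator bookkeeping. By Stokes' theorem it integrates to an error that is small. The remaining terms carry factors $R$ times a small operator and at most $|\operatorname{Im}z|^{-2}$ growth, which is absorbed by $\bar\partial\tilde f$.

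The main obstacle is this last bookkeeping. After one resolvent expansion, the term $P_aR_0F R Q_a$ still contains the random resolvent $R$ on one side. We will iterate, symmetrizing as $R=R_0+R_0\,\Delta\,R_0+R_0\,\Delta\,R\,\Delta\,R_0$, so that every term either is analytic in $z$, integrating to zero, or carries a small factor and at most $|\operatorname{Im}z|^{-1}$. We must also check that the off-window part of $\Delta$ is really killed by the localization from Step 2. A diagonal choice of $L=L_j\to\infty$, made slowly enough, then gives \eqref{eq:good-cuts} for each fixed $f$, and hence along a single sequence for all $f$ simultaneously, since $n_j^\pm$ depend only on $\omega$ and $G$.
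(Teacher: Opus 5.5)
Your argument is correct in outline but takes a genuinely different route from the paper.

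\textbf{The paper's route.} The paper never compares resolvents at a fixed cut. It conjugates by $S^{n_j^\pm}$ so that every cut sits at $0$, and notes that $V_j\to\mathcal V^{(E,\epsilon)}$ strongly. Hence $[P_0,A_j]=V_j[P_0,\mathcal C]V_j\to[P_0,A^{(E,\epsilon)}]$ in $\mathfrak S_1$, and $(A_j-z)^{-1}\to(A^{(E,\epsilon)}-z)^{-1}$ strongly. So the Helffer--Sj\"ostrand integrand converges in trace norm for each nonreal $z$. Dominated convergence, using the bound $|\operatorname{Im}z|^{-2}\sup_j\|[P_0,A_j]\|_1$ against $|\partial_{\bar z}\widetilde f|\le C|\operatorname{Im}z|^2$, gives $[P_0,f(A_j)]\to[P_0,f(A^{(E,\epsilon)})]=0$. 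The gap enters only through $f(A^{(E,\epsilon)})=0$.

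\textbf{Comparison.} Your finite-rank/tail split and your localization of the periodic resolvent are exactly what the paper gets for free. Multiplying by the fixed trace-class operator $[P_0,\mathcal C]$ upgrades strong convergence to trace-norm convergence. Your version is longer, but it is quantitative: the cut cost is controlled by the tail of $[P_0,\mathcal C]$, the local error $\epsilon/j$, and $\|[\chi_j,\mathcal C]\|$. It also makes the mechanism visible: the periodic resolvent is analytic across $G'$, so the leading term integrates to exactly zero rather than to a small error.

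\textbf{Details to tidy.}

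First, take the tail to be $[P_0,\mathcal C]-\Pi_L[P_0,\mathcal C]\Pi_L$ with $\Pi_L=\mathbbm 1_{[-L,L)}(X)$, translated to $a$. Its trace norm tends to $0$ merely because $[P_0,\mathcal C]\in\mathfrak S_1$ and $\Pi_L\to I$ strongly. The anti-diagonal truncation $j+k\ge L$ you propose is a Schur multiplier whose norm on $\mathfrak S_1$ grows like $\log L$. The bound on $\|B_N\|_1$ therefore does not transfer to it directly; it can be rescued using $\beta>1$, but there is no need.

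Second, write $\mathcal V_0$, $\mathcal C_0=\mathcal V_0\mathcal C\mathcal V_0$, $R_0$ for the translated periodic objects and $D_\delta:=\mathcal V_0-\mathcal V_{\omega,\epsilon}$. Then $\Delta=\mathcal C_0-\mathcal C_{\omega,\epsilon}=D_\delta\mathcal C\mathcal V_0+\mathcal V_{\omega,\epsilon}\mathcal C D_\delta$. So besides $R_0\mathbbm 1_{[a-L,a+L]}$ you must also localize $\mathcal C\mathcal V_0R_0\mathbbm 1_{[a-L,a+L]}=\mathcal V_0^{-1}(I+zR_0)\mathbbm 1_{[a-L,a+L]}$. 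Your commutator idea becomes precise as follows. Let $\chi_j:=\chi((X-a)/j)$ with $\chi=1$ on $[-\tfrac12,\tfrac12]$ and $\chi=0$ off $[-1,1]$. For $L\le j/2$,
\[
\mathbbm 1_{|X-a|>j}R_0\mathbbm 1_{[a-L,a+L]}=\mathbbm 1_{|X-a|>j}R_0[\chi_j,\mathcal C_0]R_0\mathbbm 1_{[a-L,a+L]}.
\]

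Third, no iteration is needed. Set $M:=\mathcal V_{\omega,\epsilon}F_{a,L}\mathcal V_{\omega,\epsilon}$; one resolvent identity on each side gives
\[
 RMR=R_0MR_0+R\Delta R_0MR_0+R_0MR_0\Delta R+R\Delta R_0MR_0\Delta R.
\]
The first term integrates to zero. Every other term contains $\Delta R_0(z)\mathbbm 1_{[a-L,a+L]}$ or $\mathbbm 1_{[a-L,a+L]}R_0(z)\Delta=(\Delta R_0(\bar z)\mathbbm 1_{[a-L,a+L]})^*$, both small uniformly on $\operatorname{supp}\widetilde f$. The factor $|\operatorname{Im}z|^{-2}$ from two random resolvents is absorbed by $\partial_{\bar z}\widetilde f$. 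You can then simply take $L=\lfloor j/2\rfloor$, uniformly in $f$.
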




\begin{proof}
{\color{blue}Step 1.} We first note that applying Lemma \ref{lem:dense-two-sided-orbit} to the given periodic configuration $\omega^{(E,\epsilon)}$ gives sequences $n_j^{\pm}\to\infty$ such that
\begin{equation} \label{eq:right-limit-sequences}
 \tau^{n_j^\pm}\omega\longrightarrow
 \omega^{(E,\epsilon)} \quad\text{pointwise}.
\end{equation}
Let $S$ be the bilateral shift, i.e., $(Su)(n):=u(n-1)$. Then, it holds that
\[
 S^nP_0S^{-n}=P_n,
 \qquad S^{-n}\mathcal CS^n=\mathcal C.
\]
Along either sequence in \eqref{eq:right-limit-sequences}, we define
\begin{equation} \label{eq:shifted-random-operators}
 V_j:=S^{-n_j^\pm}\mathcal V_{\omega,\epsilon}S^{n_j^\pm},
 \qquad
 A_j:=S^{-n_j^\pm}\mathcal C_{\omega,\epsilon}S^{n_j^\pm}
 =V_j\mathcal CV_j.
\end{equation}
Then \eqref{eq:right-limit-sequences} implies the following convergence in the strong operator topology
\begin{equation} \label{eq:strong-right-limit}
 V_j\longrightarrow\mathcal V^{(E,\epsilon)},
 \qquad
 A_j\longrightarrow
 C_{\omega^{(E,\epsilon)},\epsilon}=\mathcal V^{(E,\epsilon)}
 \mathcal C\mathcal V^{(E,\epsilon)} .
\end{equation}
{\color{blue}Step 2.} Recall the elementary fact that if $T$ is trace-class and the bounded operators $B_j,C_j$, together with their adjoints, converge strongly, then the product converges in trace norm
\begin{equation}
 B_jTC_j\longrightarrow BTC
 \quad\text{in }\mathfrak S_1.
 \label{eq:strong-trace-strong}
\end{equation}
Consequently, recalling that all diagonal multipliers commute with $P_0$, i.e.,
\[
 [P_0,A_j]=V_j[P_0,\mathcal C]V_j,
\]
and applying Lemma~\ref{lem:trace-class-cut} and \eqref{eq:strong-right-limit}, we conclude
\begin{equation}
 [P_0,A_j]\longrightarrow[P_0,A^{(E,\epsilon)}]
 \quad\text{in }\mathfrak S_1.
 \label{eq:commutator-trace-limit}
\end{equation}
{\color{blue}Step 3.} For $f\in C_c^\infty(G;\mathbb R)$, we denote $\widetilde f$ as an almost-analytic extension of $f$ (see \cite{zworski2012semiclassical} for the definition and a detailed discussion). The Helffer-Sj\"ostrand formula and the resolvent identity give
\begin{equation}
 [P_0,f(A_j)]
 =-\frac1\pi\int_{\mathbb C}\partial_{\overline{z}}\widetilde f(z)
 (A_j-z)^{-1}[P_0,A_j](A_j-z)^{-1}\,d^2z.
 \label{eq:HS-commutator}
\end{equation}
For every nonreal $z$, the integrand converges in the trace norm by strong resolvent convergence and \eqref{eq:commutator-trace-limit}. Its trace norm is bounded by
\[
\|(A_j-z)^{-1}[P_0,A_j](A_j-z)^{-1}\|_{1}\leq 
 |\operatorname{Im}z|^{-2}
 \sup_j\|[P_0,A_j]\|_1 .
\]
On the other hand, the almost-analytic extension satisfies
\begin{equation*}
|\partial_{\overline{z}}\widetilde f(z)|\leq C_{p}|\operatorname{Im}z|^{p}
\end{equation*}
for any $p\geq 2$ and $C_{p}>0$ depending only on $p$. Thus, the dominated convergence theorem yields
\begin{equation}
 [P_0,f(A_j)]\longrightarrow
 [P_0,f(A^{(E,\epsilon)})]=0
 \quad\text{in }\mathfrak S_1,
 \label{eq:smoothed-cut-limit}
\end{equation}
because $f(A^{(E,\epsilon)})=0$. Finally, for any orthogonal projection $P$ and bounded self-adjoint operator $F$, it holds that
\begin{equation}
 \|[P,F]\|_1=2\|PF(1-P)\|_1.
 \label{eq:commutator-cross-block}
\end{equation}
This, together with \eqref{eq:smoothed-cut-limit} and \eqref{eq:shifted-random-operators}, leads to
\begin{equation}
 \|P_{n_j^\pm}f(\mathcal C_{\omega,\epsilon})
 (1-P_{n_j^\pm})\|_1\longrightarrow0
\end{equation}
at cuts tending to both ends of the chain, which concludes the proof of \eqref{eq:good-cuts}.
\end{proof}

\subsection{Trace-class block decoupling}
\label{sec:trace-class-block-decoupling}

Finally, we apply Proposition \ref{prop:cost_spatial_cut} to prove Theorem \ref{thm_absence_ac}. The essential idea is that, by choosing the almost-decoupling cuts sparsely, we obtain a direct sum of finite-dimensional self-adjoint operators and a summable trace-class error. Then the Kato-Rosenblum theorem indicates that the Hamiltonian under consideration $\mathcal{C}_{\omega,\epsilon}$ has the same absolute continuous spectrum as the sum of finite-dimensional self-adjoint operators, which is therefore empty.

To be precise, we first extract an increasing bi-infinite sequence $(a_k)_{k\in\mathbb Z}$ from
the cuts in \eqref{eq:good-cuts}, tending to the corresponding infinity as
$k\to\pm\infty$, such that
\begin{equation}
 \sum_{k\in\mathbb Z} \|P_{a_k}f(\mathcal C_{\omega,\epsilon})(1-P_{a_k})\|_1<\infty. \label{eq:summable-cuts}
\end{equation}
Define
\[
 F:=f(\mathcal C_{\omega,\epsilon}),
 \qquad Q_k:=P_{a_{k+1}}-P_{a_k},
 \qquad D:=\bigoplus_{k\in\mathbb Z}Q_kFQ_k.
\]
Note that every $Q_k$ has a finite rank. Moreover, the series
\begin{equation}
 T=\sum_{k\in\mathbb Z}Q_kF(1-P_{a_{k+1}})
 \label{eq:upper-triangular-series}
\end{equation}
converges absolutely in the trace norm by \eqref{eq:summable-cuts}. On the other hand, we note that every strictly upper off-diagonal block of $F$ occurs exactly once in $T$, whence
\begin{equation}
 F-D=T+T^*\in\mathfrak S_1.
 \label{eq:trace-class-block-decoupling}
\end{equation}
Note that the operator $D$ is a direct sum of finite-dimensional self-adjoint matrices, so it has a pure point spectral type. The Kato-Rosenblum theorem (cf. \cite[Chapter X, Theorem 4.4]{kato1995perturbation}) indicates that the absolutely continuous parts
of $F$ and $D$ are unitarily equivalent and hence implies
\begin{equation}
 P_{\mathrm{ac}}(f(\mathcal C_{\omega,\epsilon}))=P_{\mathrm{ac}}(F)=P_{\mathrm{ac}}(D)=0.
 \label{eq:no-ac-smooth-function}
\end{equation}
Finally, we translate the spectrum of $f(\mathcal C_{\omega,\epsilon})$ to the original operator $\mathcal C_{\omega,\epsilon}$ using the standard spectral calculus. Choose a compact interval $J\Subset G$ containing $E$ in its interior and select $f$ to be affine with nonzero slope near $J$. By this construction, one can verify that a nonzero absolutely continuous spectral measure of $\mathcal C_{\omega,\epsilon}$ supported in $J$ implies that $f(\mathcal C_{\omega,\epsilon})$ also has such a component, which obviously contradicts \eqref{eq:no-ac-smooth-function}. Thus,
\[
 \mathbbm 1_J(\mathcal C_{\omega,\epsilon})
 P_{\mathrm{ac}}(\mathcal C_{\omega,\epsilon})=0.
\]
Moreover, we note that the event $\Omega_*$ was chosen independently of both $E$ and $\epsilon$. The interiors of the resulting intervals $J$ cover $\mathbb R\setminus\{0\}$ and admit a countable subcover. Hence,
\[
 \mathbbm 1_{\mathbb R\setminus\{0\}}
 (\mathcal C_{\omega,\epsilon})
 P_{\mathrm{ac}}(\mathcal C_{\omega,\epsilon})=0.
\]
Finally, note that an absolutely continuous spectral measure has no atom at zero, so its restriction to $\{0\}$ also vanishes. Therefore, we conclude that $P_{\mathrm{ac}}(\mathcal C_{\omega,\epsilon})=0$, which completes the proof of Theorem~\ref{thm_absence_ac}.

\subsection{A local absence result}
\label{sec:local-absence}

In this section, we present a local version of Theorem \ref{thm_absence_ac}, whose assumption on the global monotonicity of the Floquet symbol is weakened to hold only locally. In fact, this is again motivated by our application on the capacitance operator, in which the global monotonicity holds only under additional geometric assumption; see Section \ref{sec_cap} for details.

\begin{proposition}[Local absence of absolutely continuous spectrum]
\label{prop:local-no-ac}
Suppose Assumptions~\ref{asmp_decay} and \ref{asmp_finite-difference} hold. Moreover, assume that $h:=\widehat{\mathcal C}$ is real, even, and continuous on $[-\pi,\pi]$. Let $U\subset\mathbb R\setminus\{0\}$ be an open interval and assume that for every $E\in U$,
\[
 h^{-1}(\{E\})=\{-\theta(E),\theta(E)\},\qquad 0<\theta(E)<\pi,
\]
where $h$ is $C^2$ near $\pm\theta(E)$ and $h'(\theta(E))\neq0$. There are $\epsilon_0>0$ and a probability-one event $\Omega_*$ such that, for all $\omega\in\Omega_*$ and $0<\epsilon<\epsilon_0$,
\begin{equation}
 \mathbbm 1_U(\mathcal C_{\omega,\epsilon})
 P_{\mathrm{ac}}(\mathcal C_{\omega,\epsilon})=0.
 \label{eq:local-no-ac}
\end{equation}
\end{proposition}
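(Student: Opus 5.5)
The plan is to rerun the proof of Theorem~\ref{thm_absence_ac} unchanged, localizing only the gap-opening step. Inspecting Section~\ref{sec_main_proof}, the Floquet regularity enters only through Proposition~\ref{prop:block-supported-periodic-gaps}. Lemma~\ref{lem:trace-class-cut}, Lemma~\ref{lem:dense-two-sided-orbit}, Proposition~\ref{prop:cost_spatial_cut} and the block-decoupling argument of Section~\ref{sec:trace-class-block-decoupling} use only Assumptions~\ref{asmp_decay}--\ref{asmp_finite-difference}, the existence of a periodic supported configuration with $E$ in its resolvent set, and the probability-one event $\Omega_*$. That event is independent of $E$ and $\epsilon$. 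So it suffices to prove the following: for $\epsilon$ small and every $E\in U$ there is a $p$-periodic $\omega^{(E,\epsilon)}:\mathbb Z\to K_{\mathrm{supp}}$ with $E\notin\sigma(\mathcal C_{\omega^{(E,\epsilon)},\epsilon})$.

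\textbf{Step 1: local sparse-gap theorem.} Fix $E\in U$ and a compact window $\mathcal J\subset U$ around $E$. By hypothesis, $h^{-1}(\mathcal J)$ consists of exactly two symmetric arcs $\pm I$, with $I\Subset(0,\pi)$, on which $h$ is $C^2$ with $h'\neq0$. Compactness and continuity of $h'$ give this; shrink $\mathcal J$ if needed.

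I will argue that the proof of \cite[Theorem~4.1]{jaksicmolchanov1999} uses hypothesis (H) only near the two preimages of the energy window. The Bragg two-mode analysis concerns the resonant quasimomenta $\pm k\pi/p\in\pm I$, and these are the only Bloch points with energy in $\mathcal J$. The non-resonant sampling points either lie in $\pm I$, where strict monotonicity separates their energies from $e_{k,p}$ by order $p^{-1}$, or lie outside $\pm I$. In the latter case their energies stay outside $\mathcal J$, hence a fixed distance from $e_{k,p}$ once $e_{k,p}$ is in a slightly smaller window. Thus \eqref{eq:JM-sparse-additive-gap} should hold for $e_{k,p}$ in a compact subinterval of $\mathcal J^\circ$, with $\eta_0,p_0$ depending on that subinterval.

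This is the main obstacle: we are not reproducing \cite{jaksicmolchanov1999}, so I have to check that their perturbative gap argument is genuinely local in energy. I would try a Feshbach/Schur-complement reduction onto the two resonant Bloch modes. The complementary modes contribute an invertible block, either by the $p^{-1}$ separation inside $\pm I$ or by the $O(1)$ separation outside, provided $\eta\ll p^{-1}$.

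\textbf{Step 2: conversion to a multiplicative configuration.} Repeat the proof of Proposition~\ref{prop:block-supported-periodic-gaps} verbatim. Choose $\epsilon_0$ so that $1+\epsilon t\ge\tfrac12$ on $K_{\mathrm{supp}}$. Then $\Phi_{E,\epsilon}$ gives $r,a$, and we pick $\mathcal J_E\Subset\sigma(\mathcal C)\cap(E-a/4,E+a/4)\cap\mathcal J^\circ$. This interval is nonempty because $E=h(\theta(E))$ lies in the interior of the band, since $h'(\theta(E))\neq0$. Next choose $\eta,p,k$ with $e_{k,p}\in\mathcal J_E$, which is possible by density of the grid in $I$. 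Then set $s,x,t_0,t_1$, and the congruence \eqref{eq:scalar-supported-gap-congruence} yields $E\notin\sigma(\mathcal C_{\omega^{(E,\epsilon)},\epsilon})$.

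\textbf{Step 3: globalization over $U$.} Take $G\ni E$ with $\overline G\subset\rho(\mathcal C_{\omega^{(E,\epsilon)},\epsilon})$. Proposition~\ref{prop:cost_spatial_cut} and the argument of Section~\ref{sec:trace-class-block-decoupling} then give $\mathbbm 1_{J_E}(\mathcal C_{\omega,\epsilon})P_{\mathrm{ac}}(\mathcal C_{\omega,\epsilon})=0$ for all $\omega\in\Omega_*$, with $J_E\Subset G$ a neighbourhood of $E$. The interiors of the $J_E$, $E\in U$, cover $U$. Extracting a countable subcover gives \eqref{eq:local-no-ac}.
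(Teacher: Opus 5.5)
Your proposal follows the paper's route: the paper likewise asserts that the sparse-gap construction of \cite[Section~4]{jaksicmolchanov1999} is local in energy, transfers it to the multiplicative model as in Proposition~\ref{prop:block-supported-periodic-gaps}, and reruns Sections~\ref{sec_trace_class_cut}--\ref{sec:trace-class-block-decoupling} with the $E$- and $\epsilon$-independent event $\Omega_*$, so your countable-cover globalization over $U$ matches too. Your heuristic for why the gap argument is local (modes off $\pm I$ stay a fixed distance from the window, modes on $\pm I$ are separated by order $p^{-1}$) says more than the paper does; if your Feshbach route only gives the gap for $\eta\ll p^{-1}$, choose $p$ and $k$ before $\eta$ in Step~2, which the conversion allows since it only needs $\eta<a/4$ and $e_{k,p}\in\mathcal J_E$.
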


In fact, the construction of the sparse-gap configuration in \cite{jaksicmolchanov1999} is purely local, which only requires the local monotonicity of the Floquet symbol as we assume in Proposition \ref{prop:local-no-ac}; we refer the readers to \cite[Section 4]{jaksicmolchanov1999} for details. Then such a sparse-gap construction is directly transferred to our multiplicative model using the same argument as in Proposition \ref{prop:block-supported-periodic-gaps}. After that, the proof of Proposition \ref{prop:local-no-ac} follows the same lines as in Sections \ref{sec_trace_class_cut}-\ref{sec:trace-class-block-decoupling}, which are skipped here.

\section{One-dimensional-in-three-dimensional capacitance operator}
\label{sec_cap}

\subsection{Definition of the capacitance operator}

In this section, we introduce the definition of the capacitance operator of a chain of high-contrast resonators in three dimensions; we refer the reader to \cite{ammari2026longrange} for a detailed discussion of its physical origin. Fix $a>0$ and introduce the unit cell
\[
 Y_c:=(-a/2,a/2)\times\mathbb R^2.
\]
Let $D\Subset Y_c$ be a connected domain with boundary of class $C^{2,\eta}$ (i.e., the reference resonator) and denote its translations by
\[
 D_n:=D+nae_1,
 \quad
 \mathcal D:=\bigcup_{n\in\mathbb Z}D_n.
\]
The capacitance operator associated with the resonator chain $\mathcal{D}$ is defined via its Floquet symbol. To introduce it, we first note that, for $\alpha\in[-\pi/a,\pi/a]\setminus\{0\}$, the quasiperiodic single-layer operator (cf. \cite{cbms}) is an invertible operator on $L^2(\partial D)$ defined as
\begin{equation}
 \mathcal S_D^\alpha[\varphi](x)
 =-\frac1{4\pi}\int_{\partial D}
 \sum_{n\in\mathbb Z}
 \frac{e^{i\alpha na}}{|x-y-nae_1|}\varphi(y)\,d\sigma_y.
 \label{eq:quasiperiodic-single-layer}
\end{equation}
With this notion, the Floquet symbol of the capacitance operator is defined as
\begin{equation}
 \widehat{\mathcal C}^{\alpha}
 :=-\int_{\partial D}
 (\mathcal S_D^\alpha)^{-1}[\mathbbm 1_{\partial D}]\,d\sigma ,
 \label{eq:Floquet-capacitance}
\end{equation}
and the capacitance coefficients are given by
\begin{equation}
 c_n:=\frac{a}{2\pi}\int_{-\pi/a}^{\pi/a}
 \widehat{\mathcal C}^{\alpha}e^{-i\alpha na}\,d\alpha.
 \label{eq:capacitance-Fourier-coefficients}
\end{equation}
The capacitance operator $\mathcal C$ is defined as the convolution operator with kernel $\mathcal C_{n,m}:=c_{n-m}$. With the dimensionless variable $\theta=a\alpha$, write
\begin{equation}
 c(\theta)=\widehat{\mathcal C}^{\theta/a},
 \quad -\pi\leq\theta\leq\pi.
 \label{eq:dimensionless-capacitance-symbol}
\end{equation}
The layer-potential construction yields
\begin{equation}
 c\in C^\infty(( -\pi,0)\cup(0,\pi]),
 \quad c(-\theta)=c(\theta)\in\mathbb R,
 \quad c(0)=0.
 \label{eq:basic-symbol-properties}
\end{equation}
Moreover, $c(\theta)>0$ for $\theta\neq0$ by the Dirichlet principle.  These facts, as well as the small-quasimomentum asymptotic expansion used below, follow from the quasiperiodic layer-potential analysis in \cite{ammari2026longrange}.

\subsection{Verification of Assumptions~\ref{asmp_decay} and
\ref{asmp_finite-difference}}

\subsubsection{Finite-order estimates of the Floquet remainder}

Note that in \cite[Theorem 3.8]{ammari2026longrange} we have derived the leading-order asymptotics of the capacitance coefficients:
\begin{equation}
 |c_n|\leq\frac{C}{(1+|n|)\log^2(e+|n|)},
 \label{eq:capacitance-coefficient-bound}
\end{equation}
which immediately verifies Assumption \ref{asmp_decay} with $\beta=2$. In this section, we prove the estimates of finite differences in Assumption \ref{asmp_finite-difference}. This is achieved by extracting a complete-Bernstein singular core from the capacitance operator together with a regular remainder. For simplicity, we assume $a=|\partial D|=1$ throughout this section as in the general case only the relevant constants in the estimates will change. Define
\begin{equation} \label{eq:kappa-lambda}
 \kappa(\theta):=\frac1{4\pi}\log(2(1-\cos\theta)).
\end{equation}
Let $P_\parallel$ denote the orthogonal projection onto the constant functions on $\partial D$ and let $P_\perp:=I-P_\parallel$. Note that the quasiperiodic single-layer operator has the decomposition (cf. \cite{ammari2026longrange})
\begin{equation}
 \mathcal S_D^\theta
 =\kappa(\theta)P_\parallel+A+R_{\theta},
 \label{eq:single-layer-small-Floquet}
\end{equation}
where the $\theta$-independent operator $A\in\mathcal B(L^2(\partial D),H^1(\partial D))$ is defined by
\begin{equation}
 (A\varphi)(x)
 :=-\frac1{4\pi}\int_{\partial D}
 \left[
  \frac1{|x-y|}
  +\sum_{n\in\mathbb Z\setminus\{0\}}
   \left(
    \frac1{|x-y-ne_1|}-\frac1{|ne_1|}
   \right)
 \right]\varphi(y)\,d\sigma_y
 \label{eq:definition-static-layer-operator}
\end{equation}
and $R_{\theta}$ includes all remaining terms. The verification of Assumption \ref{asmp_finite-difference} is based on the following complete asymptotics of $R_{\theta}$, extending those in \cite[Proposition 3.5]{ammari2026longrange}.

\begin{lemma}[Finite-order Floquet remainder]
\label{lem:finite-order-remainder}
For every fixed integer $M\geq2$, there are $\theta_0\in(0,1)$ and constants $C_0,\ldots,C_M>0$ such that, for $0<|\theta|<\theta_0$,
\begin{equation} \label{eq:remainder-first-bounds}
\|R_{\theta}\|\leq C_0|\theta|\log(e/|\theta|),
\quad\|\partial_\theta R_{\theta}\|\leq
 C_1\log(e/|\theta|),
\end{equation}
\begin{equation} \label{eq:remainder-higher-bounds}
 \|\partial_\theta^m R_{\theta}\|\leq C_m|\theta|^{1-m},
 \quad 2\leq m\leq M.
\end{equation}
The operator norms are taken in $\mathcal B(L^2(\partial D),H^1(\partial D))$. The constants depend only on $M$ and geometry of $D$, not on $\theta$.
\end{lemma}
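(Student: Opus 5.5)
We plan to prove Lemma \ref{lem:finite-order-remainder} directly from the lattice-sum definition of $R_\theta:=\mathcal S_D^\theta-\kappa(\theta)P_\parallel-A$. Concretely, we will write its kernel as a remainder in a Poisson-summation (Ewald-type) expansion of the quasiperiodic Green's function
\[
G^\theta(z):=-\frac1{4\pi}\sum_{n\in\mathbb Z}\frac{e^{in\theta}}{|z-ne_1|},\qquad z=x-y,\ x,y\in\partial D,
\]
where the sum is understood in the Abel/distributional sense for $\theta\neq0$. Writing $z=(z_1,z_\perp)$ with $\rho=|z_\perp|$, Poisson summation in the $e_1$ direction turns the lattice sum into the Fourier series
\[
G^\theta(z)=-\frac1{2\pi}\sum_{m\in\mathbb Z}e^{i(\theta+2\pi m)z_1}K_0(|\theta+2\pi m|\rho).
\]

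The terms split according to the size of $\theta+2\pi m$. For $m\neq0$ and $|\theta|<\theta_0$, the map $\theta\mapsto K_0(|\theta+2\pi m|\rho)e^{i(\theta+2\pi m)z_1}$ is real-analytic. Its $\theta$-derivatives are bounded by $C e^{-c|m|\rho}$, and they have at most a $\log\rho$ singularity as $\rho\to0$, which is integrable on $\partial D$ and compatible with $H^1$ mapping. The $m\neq0$ part is therefore smooth in $\theta$ with uniform bounds. All the singular behaviour sits in the $m=0$ term $-\frac1{2\pi}e^{i\theta z_1}K_0(|\theta|\rho)$. We will use $K_0(t)=-\log(t/2)-\gamma+t^2 g_1(t^2)\log t+t^2g_2(t^2)$ with $g_1,g_2$ entire. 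The term $-\log|\theta|$ is a constant kernel, so it maps into $P_\parallel$; after matching the normalisation of $\kappa(\theta)=\frac1{4\pi}\log(2(1-\cos\theta))=\frac1{2\pi}\log|\theta|+O(\theta^2)$, it cancels against $\kappa(\theta)P_\parallel$ up to a smooth $O(\theta^2)$ term. At $\theta=0$ the remaining $\theta$-independent part, namely $\log\rho$ plus the constants plus the $m\neq0$ sum, must coincide with the kernel of $A$. We will check this by applying the same Poisson summation to the regularised sum in \eqref{eq:definition-static-layer-operator}, i.e. by comparing the limits $\theta\to0$.

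What remains in $R_\theta$ has three parts. The first is $(e^{i\theta z_1}-1)(-\log\rho+c)$, which is $O(\theta)$ and smooth in $\theta$. The second is $e^{i\theta z_1}(\theta\rho)^2[g_1\log(|\theta|\rho)+g_2]$ and similar terms. The third is the $m\neq0$ differences, which are smooth and $O(\theta)$. We will then differentiate term by term. The worst contribution comes from cross terms such as $(e^{i\theta z_1}-1)\log|\theta|$; these arise because the $\log|\theta|$ constant multiplies $e^{i\theta z_1}$, which is not constant in $x,y$. Such a term has norm $\lesssim|\theta|\log(e/|\theta|)$, and its $\theta$-derivative is $\lesssim\log(e/|\theta|)$. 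Every further derivative that falls on $\log|\theta|$ produces a factor $|\theta|^{-1}$. This gives $\|\partial_\theta^m R_\theta\|\lesssim|\theta|^{1-m}$ for $m\ge2$, since the $\log$ disappears once at least one derivative hits $\log|\theta|$ or the $(e^{i\theta z_1}-1)$ factor has been differentiated twice. The $\theta^2\log|\theta|$ terms are milder. These are exactly the bounds \eqref{eq:remainder-first-bounds}–\eqref{eq:remainder-higher-bounds}.

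The main obstacle is making the operator-norm estimates in $\mathcal B(L^2(\partial D),H^1(\partial D))$ rigorous and uniform, rather than just pointwise kernel bounds. This requires two things. First, tangential derivatives in $x$ of each remainder kernel must have at most weakly singular ($|x-y|^{-1}$ on the two-dimensional surface) behaviour with constants as above. Second, the Poisson summation and the differentiations in $\theta$ must be justified, since the $m$-sum is only conditionally convergent near $\rho=0$. For both, we plan to first subtract the explicit singular part $\frac1{|z|}$ (the $n=0$ term) exactly and to apply the Fourier representation only to the smooth remainder in $z$. If the direct approach becomes unwieldy, we will instead use the Ewald splitting with a fixed parameter. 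In that representation all $\theta$-dependence is explicit: the spatial part is entire in $\theta$, and the spectral part is a Gaussian-damped sum whose only nonsmooth term is $m=0$, given by an incomplete-gamma function of $\theta^2\rho^2$ with an explicit $\log|\theta|$ expansion.
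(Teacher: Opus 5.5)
Your structural picture is correct. Once the $\log|\theta|$ constant is absorbed into $\kappa(\theta)P_\parallel$, the only non-smooth part of $R_\theta$ is the cross term $\frac1{2\pi}(e^{i\theta z_1}-1)\log|\theta|\approx\frac{i z_1}{2\pi}\,\theta\log|\theta|$. Its derivatives produce exactly the hierarchy $|\theta|\log$, $\log$, $|\theta|^{1-m}$.

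The gap is in the analytic implementation through the Poisson/$K_0$ modes. Each mode $e^{i(\theta+2\pi m)z_1}K_0(|\theta+2\pi m|\rho)$ is singular on the whole set $\{\rho=0\}=\{x_\perp=y_\perp\}$. On $\partial D\times\partial D$ this set is much larger than the diagonal: for a typical $x\in\partial D$ the line $x+\mathbb R e_1$ leaves $D$ again at some $y^*\neq x$, where $|z_1|\in(0,1)$ and the true kernel is real-analytic. At such points the $\log\rho$ singularity of the $m=0$ mode, and hence of your piece $(e^{i\theta z_1}-1)(-\log\rho+c)$, is cancelled only collectively by the $m\neq0$ modes.

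If you sum in absolute value, as your per-mode bound $e^{-c|m|\rho}$ would have it, the $m\neq0$ part is of size $\gtrsim\rho^{-1}$. Its tangential $x$-gradient, which brings down a factor $|m|$, is of size $\gtrsim\rho^{-2}$, and this is not integrable over the surface near $y^*$. Likewise, the gradient of $(e^{i\theta z_1}-1)\log\rho$ is $\sim|\theta||z_1|/\rho$ with $z_1\neq0$ on this set. It is not even absolutely integrable once $\partial D$ contains a segment parallel to $e_1$, which a $C^{2,\eta}$ boundary allows.

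So two of your claims are false: that the $m\neq0$ part is smooth in $\theta$ with uniform bounds, and that $\log\rho$ is ``compatible with $H^1$ mapping.'' Your three pieces cannot be estimated separately in $\mathcal B(L^2(\partial D),H^1(\partial D))$ using the weak-singularity criterion you state. Your first remedy does not help either. Subtracting $1/|z|$ removes the singularity at $z=0$, but the spurious singularity comes from isolating the $m=0$ mode.

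Your Ewald fallback does close the gap, and it should be the main argument. The spectral part Fourier-transforms the smooth function $\mathrm{erf}(\eta r)/r$, so every mode is smooth in $\rho$ and decays like a Gaussian in $m$. The $n\neq0$ spatial terms are entire in $\theta$, and the singular term $\mathrm{erfc}(\eta|z|)/|z|$ is $\theta$-independent and goes into $A$. All bounds then reduce to sup bounds on smooth kernels and their $x$-gradients.

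The paper avoids Fourier space altogether:
\begin{itemize}
\item It writes the kernel of $R_\theta$ exactly as $-\frac1{4\pi}\sum_{n\neq0}(e^{in\theta}-1)K_n(x-y)$ with $K_n(z)=|z-ne_1|^{-1}-|n|^{-1}$, which is smooth because $|x_1-y_1|<1$.
\item It represents $K_{\pm n}(z)=\int_0^\infty\bigl(J_0(|z_\perp|t)e^{\pm z_1t}-1\bigr)e^{-nt}\,dt$ via the Laplace--Bessel identity.
\item It sums the phases into $\frac{e^{-t}}{1-e^{-t}}\cdot\frac{e^{i\theta}-1}{1-e^{-t+i\theta}}$ and reads all three bounds off $\int_0^1 t\,(t+|\theta|)^{-m-1}\,dt$.
\end{itemize}
That route is shorter and never encounters a singular set. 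Yours, once run through Ewald, costs incomplete-gamma expansions but yields the explicit leading term, which shows the bounds are sharp.
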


\begin{proof}
{\color{blue}Step 1.} We begin by presenting an explicit formula of the remainder. For $z=(z_1,z_\perp)\in\mathbb R\times\mathbb R^2$ and $n\in\mathbb Z\setminus\{0\}$, define
\[
 K_n(z):=\frac1{|z-ne_1|}-\frac1{|ne_1|}.
\]
Hence, the remainder is rewritten as
\begin{equation}
 (R_{\theta}\varphi)(x)
 =-\frac1{4\pi}\int_{\partial D}\varphi(y)
 \sum_{n\in\mathbb Z\setminus\{0\}}
 K_n(x-y)(e^{in\theta}-1)\,d\sigma_y.
 \label{eq:explicit-Floquet-remainder}
\end{equation}
Next, we decompose $R_{\theta}=R_+^\theta+R_-^\theta$, where
\begin{align}
 (R_+^\theta\varphi)(x)
 &:=-\frac1{4\pi}\int_{\partial D}\varphi(y)
   \sum_{n=1}^\infty K_n(x-y)(e^{in\theta}-1)\,d\sigma_y,
 \label{eq:positive-Floquet-remainder}\\
 (R_-^\theta\varphi)(x)
 &:=-\frac1{4\pi}\int_{\partial D}\varphi(y)
   \sum_{n=1}^\infty K_{-n}(x-y)(e^{-in\theta}-1)\,d\sigma_y.
 \label{eq:negative-Floquet-remainder}
\end{align}
To further rewrite the integrand, we set $J_0$ to be the Bessel function of the first kind of order zero and define, for $t>0$,
\begin{equation}
 H_\pm(z;t):=J_0(|z_\perp|t)e^{\pm z_1t}-1.
 \label{eq:definition-H-plus-minus}
\end{equation}
Recalling the Laplace-Bessel identity
\[
 \int_0^\infty e^{-st}J_0(rt)\,dt
 =\frac1{\sqrt{s^2+r^2}},
 \qquad s>0,
\]
we write
\begin{equation}
 K_n(z)=\int_0^\infty H_+(z;t)e^{-nt}\,dt,
 \qquad
 K_{-n}(z)=\int_0^\infty H_-(z;t)e^{-nt}\,dt.
 \label{eq:Laplace-K-plus-minus}
\end{equation}
To sum up the phase factors, let $\rho=e^{-t}$ and define
\begin{equation}
B_\theta(t)
 :=\sum_{n=1}^\infty\rho^n(e^{in\theta}-1)
   =\frac{\rho e^{i\theta}}{1-\rho e^{i\theta}}
    -\frac{\rho}{1-\rho}
   =\frac{\rho}{1-\rho}F_\theta(t),
 \label{eq:definition-B-theta}
\end{equation}
\begin{equation}
F_\theta(t):=\frac{e^{i\theta}-1}
 {1-\rho e^{i\theta}}.
 \label{eq:definition-F-theta}
\end{equation}
Note that the kernels $K_{\pm n}(x-y)$ and their first $x$-derivatives decay as $O(n^{-2})$ when $n\to\infty$, uniformly for $x,y\in\partial D$. Hence, the series defining $R_\pm^\theta$ converge absolutely, and the interchange of the sum with the integrals below is justified by the Fubini theorem. In conclusion, \eqref{eq:positive-Floquet-remainder}-\eqref{eq:definition-B-theta} lead to
\begin{align}
 R_+^\theta\varphi(x)
 &=-\frac1{4\pi}\int_{\partial D}\varphi(y)
   \int_0^\infty H_+(x-y;t)B_\theta(t)\,dt\,d\sigma_y,
 \label{eq:Laplace-R-plus}\\
 R_-^\theta\varphi(x)
 &=-\frac1{4\pi}\int_{\partial D}\varphi(y)
   \int_0^\infty H_-(x-y;t)B_{-\theta}(t)\,dt\,d\sigma_y.
 \label{eq:Laplace-R-minus}
\end{align}
{\color{blue}Step 2.} We next estimate $R_{\pm}^\theta$ using some elementary properties of Bessel functions. Since $D\Subset(0,1)\times\mathbb R^2$, one has
\[
 \gamma_0:=\sup_{x,y\in\partial D}|x_1-y_1|<1.
\]
Fix $\gamma\in(\gamma_0,1)$. The standard bounds for $J_0$ and its derivative, together with the Taylor expansion at $t=0$, imply
\begin{equation}
 |H_\pm(x-y;t)|+|\nabla_xH_\pm(x-y;t)|
 \leq C(1-e^{-t})e^{\gamma t},
 \quad t>0.
 \label{eq:H-plus-minus-bound}
\end{equation}
Here we have enlarged $\gamma_0$ to $\gamma$ to incorporate the factors of $t$ arising when the kernels are differentiated with respect to $x$. For $0<t\leq1$ and $|\theta|\leq1$, note the elementary estimate
\begin{equation}
 |1-e^{-t+i\theta}|^2
 \geq c(t+|\theta|)^2.
 \label{eq:Floquet-denominator-lower-bound}
\end{equation}
On the other hand, direct differentiation of \eqref{eq:definition-F-theta} yields, for $m\geq1$,
\begin{equation}
 |\partial_\theta^mF_\theta(t)|
 \leq C_m\frac{1-e^{-t}}
 {|1-e^{-t+i\theta}|^{m+1}}.
 \label{eq:F-theta-derivative-bound}
\end{equation}
For $m=0$, the same denominator estimate gives
\begin{equation}
 |F_\theta(t)|\leq
 C\frac{|\theta|}{t+|\theta|},
 \quad 0<t\leq1,\quad |\theta|\leq1.
 \label{eq:F-theta-zero-bound}
\end{equation}
For $t\geq1$, the denominator in
\eqref{eq:definition-F-theta} is bounded away from zero, which then leads to
\begin{equation*}
|F_\theta(t)|\leq C|\theta|,\quad |\partial_\theta^mF_\theta(t)|\leq C_m.
\end{equation*}
Substituting these estimates into \eqref{eq:Laplace-R-plus} and using \eqref{eq:Floquet-denominator-lower-bound} gives
\begin{align}
 \|\partial_\theta^mR_\pm^\theta\|
 &\leq  C_m\left(
  \int_0^1\frac{t\,dt}{(t+|\theta|)^{m+1}}+1
  \right) \leq\left\{ \begin{array}{l} 
  C_m(1+|\theta|^{1-m}), \quad m \geq 2,\\
  C (1 + \log (e/|\theta|)), \quad m=1, 
\end{array}
  \right.
 \label{eq:R-plus-higher-estimate}\\
 \|R_\pm^\theta\|
 &\leq C|\theta|\left(
  \int_0^1\frac{dt}{t+|\theta|}+1
  \right) \leq C |\theta| (1+\log(e/|\theta|)) .
 \label{eq:R-plus-zero-estimate}
\end{align}
Note that similar estimates hold for $R_-^\theta$, for which one should apply \eqref{eq:Laplace-R-minus} with $H_-$ and $F_{-\theta}$. Then the proof is complete
\end{proof}

\subsubsection{Schur complement and its derivatives}

The estimates of the capacitance coefficients are obtained by combining Lemma \ref{lem:finite-order-remainder} and the Schur decomposition. The details are presented below. Define
\[
 X:=L^2(\partial D),\qquad Y:=H^1(\partial D),
\]
and recall the orthogonal decompositions
\[
 X=X_\parallel\oplus X_\perp,\qquad
 Y=Y_\parallel\oplus Y_\perp
\]
induced by the (constant-function) projection $P_\parallel$. By Lemma \ref{lem:finite-order-remainder}, we can set $R^0=0$ to continuously extend $R^{\theta}$ throughout the Brillouin zone and write
\[
B_{\theta}:=P_\perp(A+R_{\theta})P_\perp:X_\perp\longrightarrow Y_\perp .
\]
As proved in \cite[Proposition 3.6]{ammari2026longrange}, the operator $B_0$ is invertible.  Since
$\|R_{\theta}\|_{\mathcal B(X,Y)}\to0$ as $\theta\to 0$, by decreasing
$\theta_0$ if necessary, we have
\[
 Q_{\theta}:=B_{\theta}^{-1}:Y_\perp\longrightarrow X_\perp
\]
exists for $0<|\theta|<\theta_0$ and
\[
 \sup_{0<|\theta|<\theta_0}
 \|Q_{\theta}\|_{\mathcal B(Y_\perp,X_\perp)}
 \leq2\|B_0^{-1}\|_{\mathcal B(Y_\perp,X_\perp)}.
\]
The following estimate of arbitrary-order derivatives of $Q_{\theta}$ will be applied later.
\begin{proposition} \label{prop:Q_derivative}
Let
\[
 L(\theta):=\log(e/|\theta|).
\]
For every $m\geq 2$, it holds that
\[
 \|\partial_{\theta}Q_{\theta}\|\leq C L(\theta),\qquad
 \|\partial^{m}_{\theta} Q_{\theta}\|\leq C_m|\theta|^{1-m},
\]
where $C_{m}>0$ depends only on $m$ and the fixed geometry, and not on $\theta$.
\end{proposition}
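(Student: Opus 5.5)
The plan is to differentiate the identity $B_\theta Q_\theta = I_{Y_\perp}$ repeatedly and then bound the resulting terms by induction on $m$, using Lemma \ref{lem:finite-order-remainder}. Because $A$ does not depend on $\theta$, we have $\partial_\theta^k B_\theta = P_\perp (\partial_\theta^k R_\theta) P_\perp$ for every $k\geq1$. The lemma therefore gives
\[
\|\partial_\theta B_\theta\|\le C_1 L(\theta), \qquad \|\partial_\theta^k B_\theta\|\le C_k|\theta|^{1-k}\quad (2\le k\le M).
\]
Here $M$ is chosen at least as large as the order $m$ under consideration. We already know that $\|Q_\theta\|\le 2\|B_0^{-1}\|$ uniformly for $0<|\theta|<\theta_0$.

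For the first derivative, the map $\theta\mapsto B_\theta$ is $C^\infty$ on $0<|\theta|<\theta_0$ with values in the invertible operators. The standard formula for the derivative of an inverse then gives $\partial_\theta Q_\theta=-Q_\theta(\partial_\theta B_\theta)Q_\theta$. From this we get $\|\partial_\theta Q_\theta\|\le 4\|B_0^{-1}\|^2 C_1L(\theta)$ directly.

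For higher orders, apply the Leibniz rule to $B_\theta Q_\theta=I$. This yields the recursion
\[
\partial_\theta^m Q_\theta=-Q_\theta\sum_{k=1}^{m}\binom{m}{k}(\partial_\theta^k B_\theta)(\partial_\theta^{m-k}Q_\theta).
\]
Assume inductively that $\|\partial_\theta^jQ_\theta\|\le C_j|\theta|^{1-j}$ for $2\le j<m$. We also have $\|\partial_\theta Q_\theta\|\le CL(\theta)$ and $\|Q_\theta\|\le C$.

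We then bound each term of the recursion, using that $|\theta|<\theta_0<1$.

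\begin{itemize}
\item The term $k=m$ contributes at most $C|\theta|^{1-m}$.
\item The term $k=m-1$ contributes at most $C|\theta|^{2-m}L(\theta)$ when $m\ge3$. When $m=2$ it is $(\partial_\theta B_\theta)(\partial_\theta Q_\theta)$, which is bounded by $CL(\theta)^2$.
\item A term with $2\le k\le m-2$ contributes at most $C|\theta|^{1-k}|\theta|^{1-(m-k)}=C|\theta|^{2-m}$.
\item The term $k=1$ contributes at most $CL(\theta)|\theta|^{2-m}$.
\end{itemize}

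The main point to check is that the logarithmic losses are absorbed into the leading power. Every term other than $k=m$ is $O(|\theta|^{2-m}L(\theta)^2)$. Since $|\theta|L(\theta)^2=|\theta|\log^2(e/|\theta|)$ is bounded on $(0,1)$, each of these terms is $O(|\theta|^{1-m})$. This closes the induction with a constant depending only on $m$ and on the constants from Lemma \ref{lem:finite-order-remainder}, hence only on $m$ and the geometry of $D$.

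One should also confirm that the lemma controls $\partial_\theta^k R_\theta$ in $\mathcal B(X,Y)$, so that the products make sense as maps $Y_\perp\to X_\perp$. One way to organize the computation is through the compositions $Q_\theta\,\partial^k B_\theta: X_\perp\to X_\perp$. This is consistent because each factor $\partial^k B_\theta$ is sandwiched between $Q_\theta: Y_\perp\to X_\perp$ factors.
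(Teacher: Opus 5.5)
Your proposal is correct and follows the paper's route: differentiate $B_\theta Q_\theta=I$, obtain the recursion $Q_\theta^{(m)}=-\sum_{k=1}^m\binom{m}{k}Q_\theta B_\theta^{(k)}Q_\theta^{(m-k)}$, and induct using the bounds on $\partial_\theta^k R_\theta$ from Lemma~\ref{lem:finite-order-remainder}. You also supply the induction details that the paper omits, and you handle the logarithmic losses correctly: in particular the $L(\theta)^2$ term at $m=2$ is absorbed because $|\theta|L(\theta)^2$ is bounded on $(0,1)$.
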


\begin{proof}
Let $Q_0=B_0^{-1}$. The resolvent identity gives
\[
 Q_{\theta}-Q_0
 =-Q_{\theta}(B_{\theta}-B_0)Q_0,
\]
and hence
\[
 \|Q_{\theta}-Q_0\|
 \leq C|\theta|L(\theta).
\]
Differentiating the identity $B_{\theta}Q_{\theta}=I_{Y_\perp}$ exactly $m$ times and multiplying on the left by $Q_{\theta}$ gives
\begin{equation} \label{eq_Q_derivative_proof_1}
Q_{\theta}^{(m)}
 =-\sum_{k=1}^{m}\binom{m}{k}
 Q_{\theta}B^{(k)}_{\theta}Q_{\theta}^{(m-k)},
 \qquad m\geq1,
\end{equation}
Note that Lemma~\ref{lem:finite-order-remainder} gives
\begin{equation} \label{eq_Q_derivative_proof_2}
\|B^{\prime}_{\theta}\|\leq C L(\theta),\quad
 \|B^{(k)}_{\theta}\|\leq C_k|\theta|^{1-k},
 \quad k\geq2.
\end{equation}
Hence, the preceding identity directly gives that $\|Q^{\prime}_{\theta}\|\leq C L(\theta)$. The estimates of higher order derivatives $Q^{(m)}_{\theta}$ are obtained by induction using \eqref{eq_Q_derivative_proof_1} and \eqref{eq_Q_derivative_proof_2}; the details are omitted here.
\end{proof}

Next, we consider the remaining blocks that appear in the Schur decomposition
\begin{equation*}
a_{\theta}:=P_\parallel (A+R_{\theta})P_\parallel,\quad
b_{\theta}:=P_\parallel (A+R_{\theta})P_\perp,\quad
\widetilde b_{\theta}:=P_\perp (A+R_{\theta})P_\parallel.
\end{equation*}
The Schur correction on the constant subspace $\text{Ran}(P_\parallel)$ is
\[
 G_{\theta}
 :=a_{\theta}-b_{\theta}Q_{\theta}\widetilde b_{\theta}.
\]
The asymptotics of $G_{\theta}$ are recorded as follows.
\begin{proposition} \label{prop:G_derivative}
For every $m\geq 2$, it holds that
\[
 \|G^{\prime}_{\theta}\|\leq C L(\theta),\qquad
 \|G^{(m)}_{\theta}\|\leq C_m|\theta|^{1-m},
 \quad m\geq2,
\]
where $C_{m}>0$ depends only on $m$ and the fixed geometry, and not on $\theta$.
\end{proposition}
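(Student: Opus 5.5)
We differentiate the Schur complement $G_\theta=a_\theta-b_\theta Q_\theta\widetilde b_\theta$ term by term with the Leibniz rule, reusing the derivative bounds already available. Since $A$ is $\theta$-independent, every derivative of order $k\geq1$ of the three blocks $a_\theta,b_\theta,\widetilde b_\theta$ equals the corresponding compression of $\partial_\theta^kR_\theta$. The projections $P_\parallel,P_\perp$ are bounded on both $X$ and $Y$, because constants lie in $H^1(\partial D)$ and $P_\parallel$ is integration against a smooth function. Lemma~\ref{lem:finite-order-remainder} therefore gives
\[
\|a^{(k)}_\theta\|+\|b^{(k)}_\theta\|+\|\widetilde b^{(k)}_\theta\|\leq \begin{cases} C L(\theta), & k=1,\\ C_k|\theta|^{1-k}, & k\geq2,\end{cases}
\]
together with the uniform bound $\|a_\theta\|+\|b_\theta\|+\|\widetilde b_\theta\|\leq C$ for $0<|\theta|<\theta_0$. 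Proposition~\ref{prop:Q_derivative} supplies the same type of bounds for $Q_\theta$, and $\sup\|Q_\theta\|<\infty$. Two compatibility points need attention. First, $b_\theta$ must be read as an operator $X_\perp\to Y_\parallel$ and $\widetilde b_\theta$ as $X_\parallel\to Y_\perp$, so that the composition $b_\theta Q_\theta\widetilde b_\theta:X_\parallel\to Y_\parallel$ makes sense. Second, $Y_\perp\hookrightarrow X_\perp$ is bounded, so all norms can be taken in the spaces used there.

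Next we apply the Leibniz formula:
\[
G^{(m)}_\theta=a^{(m)}_\theta-\sum_{i+j+k=m}\frac{m!}{i!\,j!\,k!}\,b^{(i)}_\theta Q^{(j)}_\theta\widetilde b^{(k)}_\theta .
\]
For $m=1$, each term contains exactly one differentiated factor, bounded by $CL(\theta)$, while the other two factors are uniformly bounded. This gives $\|G'_\theta\|\leq CL(\theta)$.

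For $m\geq2$, we introduce the unified weight $w_k(\theta)=1$ for $k=0$, $w_1=L(\theta)$, and $w_k=|\theta|^{1-k}$ for $k\geq2$. The key elementary inequality is
\[
w_i w_j w_k\leq C\,|\theta|^{1-m}\qquad\text{whenever } i+j+k=m\geq2,\ 0<|\theta|<\theta_0<1.
\]
We check it by counting the indices equal to $1$. Let $r$ be the number of indices $\geq2$ and $s$ the number of indices equal to $1$. Then the product is $L^s|\theta|^{r-(m-s)}=|\theta|^{1-m}\cdot L^s|\theta|^{\,r+s-1}$. If $r+s\geq1$ and $s\geq1$, we get $r+s-1\geq s-1$, and $L^s|\theta|^{s-1}\cdot|\theta|^{r}$ is bounded provided $r\geq1$ or $s\geq2$, using $|\theta|L(\theta)\leq C$. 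The only delicate case is $r=0$, $s=1$, which forces $m=1$ and is excluded. The remaining case $s=0$ is immediate. Summing the finitely many terms then yields $\|G^{(m)}_\theta\|\leq C_m|\theta|^{1-m}$.

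We expect the main obstacle to be bookkeeping rather than analysis. We must track the logarithmic factors $L(\theta)$ carefully to confirm that products such as $L(\theta)^2$, which arise for $m=2$ from two first-order factors, are absorbed by $|\theta|^{-1}$, and likewise that $L^{m}\lesssim|\theta|^{1-m}$. We must also verify that the mapping properties between $X$ and $Y$ allow every composition. If the projections cause trouble on $Y$, the fallback is to observe that all the relevant operators act between finite-codimension pieces and that $P_\parallel$ has rank one with a smooth kernel.
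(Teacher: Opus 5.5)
Your proposal is correct and follows essentially the paper's route: the Leibniz rule applied to $G_\theta=a_\theta-b_\theta Q_\theta\widetilde b_\theta$, fed by Lemma~\ref{lem:finite-order-remainder} and Proposition~\ref{prop:Q_derivative}; your handling of $m=1$ through the same Leibniz formula is in fact cleaner than the paper's appeal to $\|G_\theta-G_0\|\leq C|\theta|L(\theta)$, which by itself does not bound a derivative. One small fix is needed in your weight count: in the case $r=0$, $s\geq2$, the bound $|\theta|L(\theta)\leq C$ only yields $L^s|\theta|^{s-1}\leq C\,L(\theta)$, so you need the slightly stronger fact $|\theta|^{1/2}L(\theta)\leq C$ (equivalently, $|\theta|L(\theta)^p\leq C_p$), which is exactly what underlies the bound $L^2\lesssim|\theta|^{-1}$ that you note at the end.
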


\begin{proof}
Since $R_{\theta}\to 0$ as $\theta\to 0$, all the blocks extend continuously to $\theta=0$. Note that
\begin{equation*}
G_{\theta}-G_0=a_{\theta}-a_0-(b_{\theta}-b_0)Q_{\theta}\widetilde b_{\theta}-b_0(Q_{\theta}-Q_0)\widetilde b_{\theta}-b_0Q_0(\widetilde b_{\theta}-\widetilde b_0) .
\end{equation*}
It follows that
\[
 \|G_{\theta}-G_0\|\leq C|\theta|L(\theta),
\]
which proves the estimate of $G^{\prime}_{\theta}$. For the higher order derivatives, the Leibniz rule gives
\[
 G^{(m)}_{\theta}
 =a^{(m)}_{\theta}
 -\sum_{p+q+r=m}\frac{m!}{p!q!r!}
 b^{(p)}_{\theta}Q^{(q)}_{\theta}
 \widetilde b^{(r)}_{\theta}.
\]
Note that i) the zeroth order terms are uniformly bounded, ii) the first order ones are of order $O(L(\theta))$, and iii) the $k$-th order terms ($k\geq 2$) are $O(|\theta|^{1-k})$. Then we conclude the estimate of $G^{(m)}_{\theta}$ by direct calculation.
\end{proof}

The estimate of $G_{\theta}$ presented in Proposition \ref{prop:G_derivative} is transferred to the Floquet symbol of the capacitance operator as follows (see \eqref{eq:exact-capacitance-Schur-complement}). First, note that both $X_\parallel$ and $Y_\parallel$ are spanned by
$\mathbbm 1_{\partial D}$.  Thus, there is a scalar $g(\theta)$ such that
\[
 G_{\theta}[\mathbbm 1_{\partial D}]
 =g(\theta)\mathbbm 1_{\partial D}.
\]
Note that the quasiperiodic single-layer operator is self-adjoint for real $\theta$, and complex conjugation changes $\theta$ into $-\theta$; in particular, the same properties hold for the Schur complement. Therefore, $g(\theta)$ is real and even. The block inverse formula, together with $|\partial D|=1$, gives the following:
\begin{equation} \label{eq:exact-capacitance-Schur-complement}
 c(\theta)=-\frac1{\kappa(\theta)+g(\theta)}.
\end{equation}
By the estimates from Proposition \ref{prop:G_derivative}, we obtain
\begin{equation} \label{eq:g-first-bounds}
|g^{\prime}(\theta)|\leq C\log(e/|\theta|),
\end{equation}
and
\begin{equation} \label{eq:g-higher-bounds}
|g^{(m)}(\theta)|\leq C_m|\theta|^{1-m}\quad (m\geq2).
\end{equation}
Hereafter, for each fixed $m$, the constant $C_m$ is independent of $\theta$. If we further write
\[
 \eta(\theta):=\kappa(\theta)+g(\theta),\quad
 \eta_0(\theta):=\kappa(\theta)+g(0),
\]
then the explicit expression for $\kappa$ in \eqref{eq:kappa-lambda} and the estimates of $g$ in \eqref{eq:g-first-bounds}-\eqref{eq:g-higher-bounds} imply
\[
 |\eta(\theta)|\sim L(\theta),\quad
 |\eta_0(\theta)|\sim L(\theta),
\]
and, for $m\geq1$,
\[
 |\eta^{(m)}(\theta)|+
 |\eta_0^{(m)}(\theta)|
 \leq C_m|\theta|^{-m}.
\]
Consequently,
\begin{equation} \label{eq_eta(theta)_estimate}
\begin{aligned}
|\eta(\theta)^{-1}|+|\eta_0(\theta)^{-1}|
 &\leq\frac{C}{L(\theta)},\\
 \left|\partial_\theta^m\eta(\theta)^{-1}\right|
 +\left|\partial_\theta^m\eta_0(\theta)^{-1}\right|
 &\leq\frac{C_m|\theta|^{-m}}{L(\theta)^2},
 \quad m\geq1.
\end{aligned}
\end{equation}
Now we state the key result leading to the estimate of finite differences of capacitance coefficients, which follows from Lemma \ref{lem:finite-order-remainder}, Proposition \ref{prop:Q_derivative} and \ref{prop:G_derivative}, and demonstrates that the Floquet symbol of capacitance operator admits a complete-Bernstein singular core and a relatively regular remainder. In the sequel, we set
\begin{equation}
 K:=\frac12e^{-4\pi g(0)},
 \quad
 \Phi_K(t):=\frac{4\pi}{\log(1+K/t)}\quad (t>0),
 \quad \Phi_K(0)=0,
 \label{eq:PhiK-definition}
\end{equation}
and separate
\begin{equation}
 \mathcal C_{\mathrm{sing}}
 :=\Phi_K\left(I-\frac{S+S^*}{2}\right),
 \quad
 \mathcal R_{\mathrm{reg}}
 :=\mathcal C-\mathcal C_{\mathrm{sing}}.
 \label{eq:singular-regular-split}
\end{equation}

\begin{proposition}[Singular Hausdorff moments and regular remainder]
\label{prop:core-and-remainder}
The function $\Phi_K$ is complete Bernstein. Define the measure
\[
 d\nu_K(s):=
 \frac{4\pi\,ds}{s(\log^2(K/s-1)+\pi^2)},
 \quad 0<s<K,
\]
and
\[
 r(s)=\frac{\sqrt{s+2}-\sqrt s}{\sqrt{s+2}+\sqrt s},
 \quad
 \sigma_K=\frac{4\pi}{K}\delta_0+
 r_*\left(\frac{1}{s+2}\nu_K\right).
\]
Here, $r_*((s+2)^{-1}\nu_K)$ denotes the push-forward of the measure $(s+2)^{-1}\nu_K$ under the map $s\mapsto r(s)$, and $\delta_0$ is the unit point mass at zero. Then $\sigma_K$ is a finite positive measure on $[0,1)$. Let $A_n:=-c_n^{\mathrm{sing}}$, where $c_n^{\mathrm{sing}}$ are the Fourier coefficients of the operator $\mathcal C_{\mathrm{sing}}$. Then, it holds that
\begin{equation}
 A_n=\frac12\int_{[0,1)}(1-r^2)r^{n-1}\,d\sigma_K(r), \quad n\geq 1.
 \label{eq:singular-Hausdorff-moments}
\end{equation}
On the other hand, denoting the coefficient of $\mathcal R_{\mathrm{reg}}$ by $r_n$, it holds for every fixed $j\geq0$ that
\begin{equation}
 |\Delta^{(j)} r_n|\leq\frac{C_j}{(1+|n|)^{j+2}}.
 \label{eq:regular-remainder-differences}
\end{equation}
\end{proposition}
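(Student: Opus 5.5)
The plan is to treat the three assertions in order: first the complete Bernstein property of $\Phi_K$, then an explicit Stieltjes-type representation that yields the Hausdorff moment formula \eqref{eq:singular-Hausdorff-moments}, and finally a decomposition of the regular symbol into terms whose Fourier coefficients can be estimated by dyadic integration by parts.

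\emph{Complete Bernstein property and measure.} Write $\log(1+K/t)=\int_0^K\frac{d\lambda}{t+\lambda}$. This shows that $t\mapsto\log(1+K/t)$ is a Stieltjes function. Since the reciprocal of a nonzero Stieltjes function is complete Bernstein, $\Phi_K=4\pi/\log(1+K/t)$ is complete Bernstein. To identify its representing measure I will use Stieltjes inversion. The function $\Phi_K(t)/t$ is Stieltjes, and for $t=-s\pm i0$ with $0<s<K$ one has $\log(1+K/t)=\log(K/s-1)\mp i\pi$. The jump of $\Phi_K$ across the negative axis therefore has density $\frac{4\pi\cdot\pi}{\log^2(K/s-1)+\pi^2}$. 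For $s>K$ the logarithm is real, so the measure is supported in $(0,K)$. The large-$t$ behaviour $\Phi_K(t)\sim 4\pi t/K$ fixes the drift, and $\Phi_K(0)=0$ kills the constant. This gives
\[
\Phi_K(t)=\frac{4\pi}{K}\,t+\int_{(0,K)}\frac{t}{t+s}\,d\nu_K(s).
\]
Finiteness of $\sigma_K$ reduces to $\int_0^K\frac{d\nu_K(s)}{s+2}<\infty$. This holds because near $s=0$ the density is $\sim 1/(s\log^2(1/s))$, which is integrable.

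\emph{Hausdorff moments.} The operator $I-\frac{S+S^*}2$ has symbol $t=1-\cos\theta$, so $c^{\mathrm{sing}}_n$ are the Fourier coefficients of $\Phi_K(1-\cos\theta)$. I will split the representation into two pieces.
\begin{itemize}
\item The drift term $\frac{4\pi}{K}(1-\cos\theta)$ contributes $-\frac{2\pi}{K}$ at $n=\pm1$. This matches the contribution of the atom $\frac{4\pi}{K}\delta_0$ at $r=0$, since $(1-r^2)r^{n-1}$ evaluated at $r=0$ equals $\mathbbm 1_{n=1}$.
\item For the integral term, write $\frac{t}{t+s}=1-\frac{s}{s+1-\cos\theta}$. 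The classical expansion $\frac{1}{s+1-\cos\theta}=\frac{1}{\sqrt{s(s+2)}}\sum_n r(s)^{|n|}e^{in\theta}$ holds, where $r+r^{-1}=2(s+1)$. Hence $r^{-1}-r=2\sqrt{s(s+2)}$, that is $\frac{1}{\sqrt{s(s+2)}}=\frac{2r}{1-r^2}$.
\end{itemize}
Consequently, for $n\ge1$,
\[
-c^{\mathrm{sing}}_n\ni \int\frac{s}{\sqrt{s(s+2)}}r^n\,d\nu_K=\frac12\int(1-r^2)r^{n-1}\frac{d\nu_K(s)}{s+2},
\]
using $\frac{s}{\sqrt{s(s+2)}}=\frac{\sqrt{s(s+2)}}{s+2}=\frac{1-r^2}{2r(s+2)}$. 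Pushing forward along $s\mapsto r(s)$ then gives \eqref{eq:singular-Hausdorff-moments}. Fubini is justified by positivity.

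\emph{Regular remainder.} With $t=1-\cos\theta$, the definition of $K$ gives $\eta_0=\frac1{4\pi}\log(t/K)$. The symbol of $\mathcal R_{\mathrm{reg}}$ is then
\[
c(\theta)-\Phi_K(t)=\underbrace{\Bigl(\tfrac1{\eta_0}-\tfrac1{\eta}\Bigr)}_{=\frac{g-g(0)}{\eta\eta_0}}+\underbrace{\Bigl(\tfrac{4\pi}{\log(K/t)}-\tfrac{4\pi}{\log(1+K/t)}\Bigr)}_{=\frac{4\pi\log(1+t/K)}{\log(K/t)\log(1+K/t)}}.
\]
This symbol is smooth on $(0,\pi]$ and even, so only $\theta\to0$ matters. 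I will estimate each piece near zero.
\begin{itemize}
\item By \eqref{eq:g-first-bounds}--\eqref{eq:g-higher-bounds} and \eqref{eq_eta(theta)_estimate}, the first piece is $O(|\theta|/L)$, with $m$-th derivative $O(|\theta|^{1-m}/L)$.
\item The second piece is $O(\theta^2/L^2)$ and better.
\end{itemize}
The difference $\Delta^{(j)}r_n$ corresponds to multiplying the symbol by $(e^{i\theta}-1)^j$. This multiplier raises the vanishing order by $j$ and gives a function $h_j$ with $|h_j^{(m)}|\lesssim|\theta|^{j+1-m}/L(\theta)$. I then integrate by parts $j+1$ times and split at $|\theta|=1/n$. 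On the inner region I use $|h_j^{(j+1)}|\lesssim 1/L$, which gives $O(1/n)$. On the outer region I perform one more integration by parts, where boundary terms are $O(n^{-1}/L(1/n))$.

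\emph{Main obstacle.} The outer integral $\int_{1/n}^{\pi}|\theta|^{-1}L^{-1}\,d\theta\sim\log\log n$ threatens a loss of a $\log\log$ factor, against the clean $n^{-j-2}$ claimed. My first remedy is to exploit the structure of $g-g(0)$ more precisely. Following the proof of Lemma \ref{lem:finite-order-remainder}, its leading part comes from $\int_0^1 t\,F_\theta(t)\,dt$ and is an explicit function such as $c\,\theta\log|\theta|$ plus something smoother. The first piece would then be an explicit multiple of $\theta\log|\theta|/\eta\eta_0\approx\theta/L$ whose derivative structure is a genuine symbol. For such symbols the oscillatory integral gains the $\log$ (via $\partial_\theta(1/L)\sim 1/(\theta L^2)$, which is integrable). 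Failing that, I would settle for $n^{-j-2}\log\log n$, which still suffices for Assumption \ref{asmp_finite-difference} with $\beta=2$ when combined with the Hausdorff representation of the singular part.
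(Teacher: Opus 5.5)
Your treatment of the singular core is correct and follows the paper. The one difference is that you get the complete Bernstein property from $\log(1+K/t)=\int_0^K\frac{d\lambda}{t+\lambda}$ being a Stieltjes function, together with the fact that reciprocals of Stieltjes functions are complete Bernstein. The paper instead checks the Pick property of $\Phi_K$ directly. From there on, your argument coincides with the paper's: the Stieltjes inversion, the Poisson expansion, the identity $\frac{s}{\sqrt{s(s+2)}}r^n=\frac{(1-r^2)r^{n-1}}{2(s+2)}$, and the bookkeeping of the atom at $n=1$. Your splitting of the regular symbol into $(g-g(0))/(\eta\eta_0)$ and $\phi_0-\Phi_K(1-\cos\theta)$ is also the paper's.

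The gap is in the final step. As written, you do not prove \eqref{eq:regular-remainder-differences}. You either invoke an unproven finer structure of $g-g(0)$, or you settle for $O(n^{-j-2}\log\log n)$, which is weaker than the statement even though it would suffice for Assumption~\ref{asmp_finite-difference}. The $\log\log$ loss comes from your bound $|\partial_\theta^m[(g-g(0))W]|\lesssim|\theta|^{1-m}/L$, which discards a logarithm for $m\geq 2$.

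To recover it, expand by Leibniz using \eqref{eq:g-first-bounds}, \eqref{eq:g-higher-bounds}, $|W|\lesssim L^{-2}$ and $|W^{(k)}|\lesssim|\theta|^{-k}L^{-3}$ for $k\geq1$. Then:
\begin{itemize}
\item $g^{(m)}W=O(|\theta|^{1-m}L^{-2})$;
\item $g'W^{(m-1)}=O(L\cdot|\theta|^{1-m}L^{-3})$;
\item $(g-g(0))W^{(m)}=O(|\theta|L\cdot|\theta|^{-m}L^{-3})$;
\item the mixed terms $g^{(k)}W^{(m-k)}$ with $2\le k\le m-1$ are $O(|\theta|^{1-m}L^{-3})$.
\end{itemize}
So a lone factor $L^{-1}$ appears only for $m\le1$ (from $g'W$). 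For $m\geq2$ you get $|\theta|^{1-m}L^{-2}$, which is the paper's \eqref{eq_complete_bernstein_1}.

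Consequently the symbol $h_j$ of $\Delta^{(j)}r_n$ satisfies $|h_j^{(k)}|\lesssim|\theta|^{j+1-k}/L$ for $k\le j+1$, and $|h_j^{(j+2)}|\lesssim|\theta|^{-1}L(\theta)^{-2}$ near $0$. The latter is integrable because $\int_0\frac{d\theta}{\theta\log^2(e/\theta)}<\infty$. This is exactly the gain $\partial_\theta(1/L)\sim1/(\theta L^2)$ you hoped to extract from an explicit form of $g$, but it is already contained in the available bounds.

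Hence $h_j\in C^{j+1}$ on the circle, with $h_j^{(j+1)}$ absolutely continuous. Integrating by parts $j+2$ times gives $|\Delta^{(j)}r_n|\le\|h_j^{(j+2)}\|_{L^1}/(2\pi|n|^{j+2})$, with no splitting at $|\theta|=1/n$. Your split at $|\theta|=1/n$ also works once you use the sharp bound, since then $\int_{1/n}^{\pi}\frac{d\theta}{\theta L^2}=O(1)$.
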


The proof is postponed to the end of this section. With Proposition \ref{prop:core-and-remainder}, the estimate of finite differences of capacitance coefficients follows directly:

\begin{proposition}[Finite-difference estimate of the capacitance coefficients]
\label{prop:verify-critical-differences}
For $j=0,1,2$ and $n\neq0$,
\begin{equation}
 |\Delta^{(j)} c_n|\leq
 \frac{C_j}{(1+|n|)^{j+1}\log^2(e+|n|)}.
 \label{eq:capacitance-finite-difference-estimates}
\end{equation}
Consequently, the scalar capacitance operator satisfies
Assumptions~\ref{asmp_decay} and \ref{asmp_finite-difference} with
$\beta=2$.
\end{proposition}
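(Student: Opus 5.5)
The plan is to use the splitting $c_n=c_n^{\mathrm{sing}}+r_n$ from Proposition~\ref{prop:core-and-remainder} and estimate the two parts separately. Since $\Delta^{(j)}$ is linear, it suffices to bound $\Delta^{(j)}c_n^{\mathrm{sing}}=-\Delta^{(j)}A_n$ and $\Delta^{(j)}r_n$. The regular part is immediate: \eqref{eq:regular-remainder-differences} gives $|\Delta^{(j)}r_n|\le C_j(1+|n|)^{-j-2}$, and this is smaller than $(1+|n|)^{-j-1}\log^{-2}(e+|n|)$ up to a constant. The singular part has symmetric coefficients, $c^{\mathrm{sing}}_{-n}=c^{\mathrm{sing}}_n$, because $\Phi_K(I-(S+S^*)/2)$ is a real even function of $S+S^*$. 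For $n\le -3$ the differences $\Delta^{(j)}$ at $n$ involve only indices $\le -1$, so they reduce to differences at positive indices up to sign and shift. The finitely many $n$ near zero whose stencil crosses $n=0$ only affect constants. It therefore suffices to treat $n\ge1$ for the Hausdorff moments \eqref{eq:singular-Hausdorff-moments}.

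For $n\ge1$, differencing under the integral gives
\[
\Delta^{(j)}A_n=\frac12\int_{[0,1)}(1-r^2)(r-1)^j r^{n-1}\,d\sigma_K(r),
\]
so that
\[
|\Delta^{(j)}A_n|\le\frac12\int_{[0,1)}(1-r)^{j+1}(1+r)r^{n-1}\,d\sigma_K(r).
\]
The atom at $r=0$ contributes only when $n=1$, and that term is harmless. The remaining task is to understand how $\sigma_K$ behaves near $r=1$. Under the map $r(s)$ we have $1-r\approx\sqrt{2s}$ as $s\to0$. Also $d\nu_K(s)\approx 4\pi\,ds/(s\log^2(1/s))$ near $s=0$, since $\log(K/s-1)\sim\log(1/s)$. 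Setting $u=1-r$, so that $s\approx u^2/2$, the pushed-forward measure near $u=0$ is therefore comparable to $du/(u\log^2(1/u))$, using $ds/s=2\,du/u$ and $\log(1/s)\approx 2\log(1/u)$. Away from $u=0$, i.e.\ for $s$ bounded below, the measure is finite and supported in $r\le r_1<1$; its contribution is $O(r_1^{n})$, which decays exponentially.

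The main step is then the Laplace-type estimate
\[
\int_0^{u_0}u^{j+1}(1-u)^{n-1}\frac{du}{u\log^2(1/u)}\lesssim\int_0^{u_0}u^{j}e^{-(n-1)u}\frac{du}{\log^2(1/u)}.
\]
I would split this integral at $u=1/n$. On $u<1/n$, the bound $\log(1/u)\ge\log n$ gives a contribution at most $n^{-j-1}\log^{-2}n$. On $u>1/n$, substitute $v=nu$. This gives $n^{-j-1}\int_1^{\infty}v^je^{-v}\log^{-2}(n/v)\,dv$, restricted to $v\le nu_0$. Split again at $v=\sqrt n$. Where $v\le\sqrt n$, we have $\log(n/v)\ge\frac12\log n$. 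Where $v>\sqrt n$, the factor $e^{-v}$ gives superpolynomial decay. Both pieces are therefore $O(n^{-j-1}\log^{-2}n)$.

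The point I expect to require the most care is making the comparison of $\sigma_K$ with $du/(u\log^2(1/u))$ rigorous. This means tracking the Jacobian of $r(s)$, the factor $(s+2)^{-1}$, and the exact logarithmic density $(\log^2(K/s-1)+\pi^2)^{-1}$ near $s\to0$, where $K/s-1\to\infty$. Only upper bounds are needed, which simplifies this. Combining the singular and regular bounds yields \eqref{eq:capacitance-finite-difference-estimates} for $j=0,1,2$. Assumption~\ref{asmp_finite-difference} with $\beta=2$ then follows, with the case $n=0$ handled by finiteness. Assumption~\ref{asmp_decay} is the case $j=0$, already covered by \eqref{eq:capacitance-coefficient-bound}, using $\log(|n|+1)\sim\log(e+|n|)$ for $|n|\ge1$.
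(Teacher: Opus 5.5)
Your proof is correct. Its outer structure matches the paper: you split $c_n=c_n^{\mathrm{sing}}+r_n$, absorb $r_n$ via \eqref{eq:regular-remainder-differences}, and reduce negative indices by evenness. The singular part, however, is handled by a genuinely different route.

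The paper first quotes an Abelian estimate $0\le A_n\le C/((1+n)\log^2(e+n))$ for the symbol $\Phi_K(1-\cos\theta)$, with the details skipped. It then uses only the positivity of $\sigma_K$: writing $r^{n-1}=r^{n-m}r^{m-1}$ with $m=\lfloor n/2\rfloor$, it bounds $|\Delta^{(j)}A_n|\le \sup_{0\le r\le 1}(1-r)^j r^{n-m}\,A_m\le C_j n^{-j}A_m$. The higher differences thus inherit the $j=0$ bound with an extra factor $n^{-j}$.

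You instead read off the density of $\sigma_K$ near $r=1$ from its explicit formula, using $s=u^2/(2(1-u))$, $ds/s\approx 2\,du/u$ and $\log(1/s)\approx 2\log(1/u)$. You then run a Laplace-type estimate, which treats $j=0,1,2$ uniformly. In particular, this supplies the $j=0$ bound that the paper leaves to the reader.

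The paper's halving argument is softer: it needs only positivity of the measure and the $j=0$ decay, not the fine structure of $\sigma_K$. Yours is more self-contained, at the cost of the Jacobian computation for $r(s)$.

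Two small points need tidying. After substituting $v=nu$, the exponential is $e^{-(n-1)v/n}\le e^{-v/2}$ for $n\ge 2$, not $e^{-v}$. On the range $\sqrt n<v\le nu_0$, you should say explicitly that $u_0<1$, so that $\log^{-2}(n/v)\le \log^{-2}(1/u_0)$ stays bounded before you invoke the exponential decay.
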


\begin{proof}
The standard Abelian estimate for the explicit symbol
$\Phi_K(1-\cos\theta)$ gives
\begin{equation}
 0\leq A_n\leq
 \frac{C}{(1+n)\log^2(e+n)},
 \label{eq:An-Abelian-bound}
\end{equation}
which indeed follows by integrating the Fourier coefficient once and using
\[
 \Phi_K^{\prime}(t)=
 \frac{4\pi K}{t(t+K)\log^2(1+K/t)}.
\]
The details are skipped here. Now, let $m=\lfloor n/2\rfloor$. The identity \eqref{eq:singular-Hausdorff-moments} gives
\begin{align*}
 |\Delta A_n|
 &\leq
 \sup_{0\leq r\leq1}(1-r)r^{n-m}A_m
 \leq\frac{C}{n}A_m
 \leq\frac{C}{n^2\log^2(e+n)},\\
 0\leq\Delta^{(2)} A_n
 &\leq
 \sup_{0\leq r\leq1}(1-r)^2r^{n-m}A_m
 \leq\frac{C}{n^2}A_m
 \leq\frac{C}{n^3\log^2(e+n)}.
\end{align*}
Together with \eqref{eq:regular-remainder-differences}, these estimates
prove \eqref{eq:capacitance-finite-difference-estimates} for $n>0$.
Reality and evenness of $c_n$ give the desired estimates for negative indices.
\end{proof}

Finally, we present the proof of Proposition \ref{prop:core-and-remainder}.

\begin{proof}[Proof of Proposition \ref{prop:core-and-remainder}]
{\color{blue}Step 1.} For $z$ in the upper half-plane, $1+K/z$ lies in the lower half-plane. The principal logarithm therefore has negative imaginary part and its reciprocal has positive imaginary part. Noting that $\Phi_K$ is positive on $(0,\infty)$, we know by the Pick characterization that $\Phi_K$ is complete Bernstein. For $0<s<K$, its boundary value on the cut is
\[
 \Phi_K(-s+i0)
 =\frac{4\pi}{\log(K/s-1)-i\pi},
\]
whereas its boundary imaginary part vanishes for $s>K$.  Moreover,
\[
 \Phi_K(0^{+})=0,
 \quad
 \lim_{t\to\infty}\frac{\Phi_K(t)}{t}=\frac{4\pi}{K}.
\]
The Stieltjes inversion formula for complete Bernstein functions therefore gives the explicit representation
\begin{equation}
 \Phi_K(t)=\frac{4\pi}{K}t
 +\int_0^K\frac{t}{t+s}
 \frac{4\pi\,ds}{s(\log^2(K/s-1)+\pi^2)}.
 \label{eq:PhiK-integral-representation}
\end{equation}
The endpoint behavior of the density in \eqref{eq:PhiK-integral-representation} is
\[
 \frac{d\nu_K}{ds}(s)
 =O\left(\frac{1}{s\log^2(K/s)}\right)
 \quad(s\to 0^{+}),
 \quad
 \frac{d\nu_K}{ds}(s)
 =O\left(\frac{1}{\log^2(K/(K-s))}\right)
 \quad(s\to K^{-}).
\]
In particular, the first majorant is integrable and the second one is bounded near $K$. Hence $\nu_K$ is finite. The same endpoint estimates, multiplied by the bounded factor $(s+2)^{-1}$, show that the continuous part of $\sigma_K$ is finite; its remaining atom has mass $4\pi/K$.

{\color{blue}Step 2.} Now we estimate the Fourier coefficient of the singular part. First, for $s>0$, the Poisson expansion gives
\[
 \frac{1}{s+1-\cos\theta}
 =\frac{1}{\sqrt{s(s+2)}}
 \left(1+2\sum_{m=1}^{\infty}r(s)^m\cos(m\theta)\right).
\]
Since $\nu_K$ is finite, the interchange of Fourier integral and that in \eqref{eq:PhiK-integral-representation} is justified by the dominated convergence theorem. For every $n\geq1$, this gives the following formula of the corresponding Fourier coefficients
\begin{equation} \label{eq_coeff_An}
A_n=\frac{2\pi}{K}\mathbbm 1_{\{n=1\}}
 +\int_0^K\frac{s}{\sqrt{s(s+2)}}r(s)^n\,d\nu_K(s).
\end{equation}
Note the elementary identity
\[
 \frac{s}{\sqrt{s(s+2)}}r(s)^n
 =\frac{1-r(s)^2}{2(s+2)}r(s)^{n-1}.
\]
The continuous term therefore equals the contribution of $r_*((s+2)^{-1}\nu_K)$ to \eqref{eq:singular-Hausdorff-moments}.  The Fourier coefficient of $(4\pi/K)(1-\cos\theta)$ at $n=1$ is $-2\pi/K$, and the atom $(4\pi/K)\delta_0$ contributes $(1/2)(4\pi/K)=2\pi/K$ to $A_1$ and zero to $A_n$ for $n\geq2$. This proves \eqref{eq:singular-Hausdorff-moments}.

{\color{blue}Step 3.} To estimate the regular remainder, let
\[
 \phi_0(\theta):=-\frac1{\kappa(\theta)+g(0)}
 =\frac{4\pi}{\log(K/(1-\cos\theta))}
\]
for sufficiently small $|\theta|$. We first estimate the difference $c(\theta)-\phi_0(\theta)$ and then the difference between $\phi_0(\theta)$ and the complete-Bernstein core, showing that both parts admit the required estimates. For the first part, note that \eqref{eq:exact-capacitance-Schur-complement} implies the identity
\[
 c(\theta)-\phi_0(\theta)
 =(g(\theta)-g(0))W(\theta),
 \quad
 W(\theta)=
 \frac1{(\kappa(\theta)+g(\theta))
          (\kappa(\theta)+g(0))}.
\]
By \eqref{eq:g-first-bounds},\eqref{eq:g-higher-bounds} and \eqref{eq_eta(theta)_estimate}, one can calculate that
\[
 |W(\theta)|\leq\frac{C}{L(\theta)^2},\quad
 |W^{(m)}(\theta)|
 \leq\frac{C_m|\theta|^{-m}}{L(\theta)^3},
 \quad (m\geq 2).
\]
The same estimate for $W^{\prime}$ holds when $m=1$. Applying the Leibniz rule to $(g(\theta)-g(0))W(\theta)$ gives
\begin{equation*}
|c(\theta)-\phi_0(\theta)| \leq\frac{C|\theta|}{L(\theta)},\quad
|\partial_\theta(c(\theta)-\phi_0(\theta))|\leq\frac{C}{L(\theta)},
\end{equation*}
and
\[
|\partial_\theta^m( c(\theta)-\phi_0(\theta) )|
\leq \frac{C_m|\theta|^{1-m}}{L(\theta)^2} \quad (m\geq 2).
\]
For $m\geq2$, note that i) the term on which all derivatives fall on $g-g(0)$ has the required bound, ii) the term on which all derivatives fall on $W$ has the same bound because $|g-g(0)|\leq C|\theta|L(\theta)$; every mixed term has an additional negative power of $L(\theta)$.

It remains to compare $\phi_0$ with the complete-Bernstein core.  Write
\[
 \ell_K(\theta):=\log(K/(1-\cos\theta)),\quad
 \delta_K(\theta):=\log(1+(1-\cos\theta)/K).
\]
After decreasing $\theta_0$ if necessary, one has the asymptotics $\ell_K(\theta)\sim L(\theta)$, and
\[
 \log(1+K/(1-\cos\theta))
 =\ell_K(\theta)+\delta_K(\theta).
\]
Note that the function $\delta_K$ is smooth and vanishes to second order at zero. Thus, for every fixed $m\geq 0$ and $0<|\theta|<\theta_0$, it holds
\[
 |\delta_K^{(m)}(\theta)|
 \leq C_m|\theta|^{2-m}.
\]
Consequently,
\[
 \begin{split}
 \left|
 \big(\ell_K(\theta)(\ell_K(\theta)+\delta_K(\theta))\big)^{-1}
 \right|
 &\leq\frac{C}{L(\theta)^2},\\
 \left|
 \partial_\theta^m
 \big(\ell_K(\theta)(\ell_K(\theta)+\delta_K(\theta))\big)^{-1}
 \right|
 &\leq
 \frac{C_m|\theta|^{-m}}{L(\theta)^3}
 \quad (m\geq1).
 \end{split}
\]
Since
\[
 \phi_0(\theta)-\Phi_K(1-\cos\theta)
 =
 \frac{4\pi\delta_K(\theta)}
 {\ell_K(\theta)
  (\ell_K(\theta)+\delta_K(\theta))},
\]
a direct calculation yields
\[
 \left|
 \partial_\theta^m
 (\phi_0(\theta)-\Phi_K(1-\cos\theta))
 \right|
 \leq
 \frac{C_m|\theta|^{2-m}}{L(\theta)^2}.
\]
This completes the estimate of the remainder. In summary, the symbol $q_{\mathrm{reg}}$ of $\mathcal R_{\mathrm{reg}}$ satisfies
\begin{equation} \label{eq_complete_bernstein_1}
|q_{\mathrm{reg}}(\theta)|\leq C|\theta|/L(\theta),\quad
|q_{\mathrm{reg}}^{\prime}(\theta)|\leq C/L(\theta),\quad
|q_{\mathrm{reg}}^{(m)}(\theta)|\leq C_m|\theta|^{1-m}/L(\theta)^2 \quad (m\geq 2).
\end{equation}
Finally, noting that
\[
 \Delta^{(j)} r_n
 =\frac1{2\pi}\int_{-\pi}^{\pi}
(e^{-i\theta}-1)^jq_{\mathrm{reg}}(\theta)e^{-in\theta}\,d\theta,
\]
the above estimates together with the smoothness of the integrand away from zero imply that
\[
 |\Delta^{(j)} r_n|
 \leq\frac{C_j}{(1+|n|)^{j+2}},
\]
which is proved by applying the integration by parts. This completes the proof of \eqref{eq:regular-remainder-differences}.
\end{proof}




\subsection{Local and global results for one-resonator chains}
\label{sec:scalar-regularity-verification}

We now verify the local Floquet hypothesis of Proposition~\ref{prop:local-no-ac}. The argument establishes strict monotonicity near the acoustic endpoint and excludes other preimages at sufficiently small positive energies. After that, we verify the global monotonicity in Assumption~\ref{asmp_eigenvalue_monotone} when the reference resonator has an additional reflection symmetry.

\begin{proposition}[Regular acoustic window]
\label{prop:capacitance-regular-acoustic-window}
There is $E_{\mathrm{reg}}>0$ such that, for every
$E\in(0,E_{\mathrm{reg}})$, the complete inverse image of the scalar
capacitance symbol is
\[
 c^{-1}(\{E\})=\{-\theta(E),\theta(E)\},\quad
 0<\theta(E)<\pi,
\]
where $c$ is $C^2$ near both points and $\partial_{\theta}c(\theta(E))>0$. In particular, on every compact interval $\mathcal I\Subset(0,E_{\mathrm{reg}})$ these preimages remain a positive distance from $0$ and $\pm\pi$, and $|\partial_{\theta}c|$ has a positive lower bound on $c^{-1}(\mathcal I)$.
\end{proposition}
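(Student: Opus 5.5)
The argument is local near $\theta=0$, using the exact Schur-complement formula \eqref{eq:exact-capacitance-Schur-complement}, $c(\theta)=-1/(\kappa(\theta)+g(\theta))$, together with the derivative bounds on $g$. A compactness argument then handles the rest of the Brillouin zone.

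\textbf{Step 1: strict monotonicity near zero.} First fix $\theta_1\in(0,\theta_0)$ small. For $0<\theta<\theta_1$ we have $\eta(\theta)=\kappa(\theta)+g(\theta)\to-\infty$ like $\frac{1}{2\pi}\log\theta$, so $\eta<0$ and $c=-1/\eta>0$ with $c(\theta)\to0$ as $\theta\to0^+$. Differentiating gives $c'=\eta'/\eta^2$, with
\[
\eta'(\theta)=\kappa'(\theta)+g'(\theta)=\frac{1}{4\pi}\frac{\sin\theta}{1-\cos\theta}+g'(\theta).
\]
The first term is $\geq \frac{1}{4\pi\theta}$ up to a constant factor, while \eqref{eq:g-first-bounds} gives $|g'|\le C\log(e/\theta)$. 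Hence $\eta'(\theta)\ge c_0/\theta>0$ after shrinking $\theta_1$. Therefore $c'(\theta)\ge c_0/(\theta\,\eta(\theta)^2)\ge c_0'/(\theta L(\theta)^2)>0$ on $(0,\theta_1)$, and $c$ is $C^2$ there by \eqref{eq:basic-symbol-properties}.

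\textbf{Step 2: excluding other preimages.} By \eqref{eq:basic-symbol-properties}, $c$ is continuous on $[\theta_1,\pi]$, and $c>0$ there by the Dirichlet principle. Set $m_1:=\min_{[\theta_1,\pi]}c>0$ and $E_{\mathrm{reg}}:=\min\{m_1,\,c(\theta_1)\}$; in fact $c(\theta_1)\ge m_1$, so $E_{\mathrm{reg}}=m_1$ works. Take $E\in(0,E_{\mathrm{reg}})$. Since $c$ is continuous and strictly increasing on $(0,\theta_1]$ with $c(0^+)=0$ and $c(\theta_1)\ge m_1>E$, there is a unique $\theta(E)\in(0,\theta_1)$ with $c(\theta(E))=E$. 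There are no solutions in $[\theta_1,\pi]$ because $c\ge m_1>E$ there, and $\theta=0$ is excluded because $c(0)=0$. Evenness of $c$ then gives $c^{-1}(\{E\})=\{\pm\theta(E)\}$, with $\partial_\theta c(\theta(E))>0$ from Step 1.

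\textbf{Step 3: uniformity on compact subintervals.} Let $\mathcal I=[E_1,E_2]\Subset(0,E_{\mathrm{reg}})$. Monotonicity of $c$ gives $\theta(E)\in[\theta(E_1),\theta(E_2)]\subset(0,\theta_1)$, which keeps the preimages away from $0$ and $\pm\pi$. The continuous positive function $c'$ attains a positive minimum on this compact set, which gives the lower bound on $|\partial_\theta c|$.

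The main obstacle is Step 1, specifically showing that $\kappa'$ dominates $g'$. This rests entirely on the logarithmic bound \eqref{eq:g-first-bounds} versus the $1/\theta$ growth of $\kappa'$, and on $\eta$ having a definite sign near zero, which ensures that $c=-1/\eta$ is positive and increasing rather than blowing up. The $C^2$ regularity near $\pm\theta(E)$ comes for free from smoothness away from $0$.
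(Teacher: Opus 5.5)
Your proof is correct. The global part matches the paper: positivity of $c$ away from $0$ by the Dirichlet principle, a positive minimum on $[\theta_1,\pi]$, evenness, and continuity of $c'$ on the compact preimage set. The difference is in how you get $c'>0$ near $\theta=0$.

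The paper writes $c=\Phi_K(1-\cos\theta)+q_{\mathrm{reg}}$. It then compares $\Phi_K'(1-\cos\theta)\sin\theta\sim 1/(\theta L(\theta)^2)$ with the bound $|q_{\mathrm{reg}}'|\le C/L(\theta)$ from Proposition~\ref{prop:core-and-remainder}. You instead differentiate the Schur formula \eqref{eq:exact-capacitance-Schur-complement} directly, $c'=\eta'/\eta^2$. You then use that $\kappa'=\frac{1}{4\pi}\cot(\theta/2)\gtrsim 1/\theta$ dominates $|g'|\le C\log(e/\theta)$ from \eqref{eq:g-first-bounds}.

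Your route is more elementary. It needs only the Schur formula and the first-derivative bound on $g$. It does not need the complete-Bernstein representation of $\Phi_K$ or the derivative estimates on $W$ and $\delta_K$ that produce the bound on $q_{\mathrm{reg}}'$; those are themselves built from the same bound on $g'$. The paper's route reuses a decomposition it already needs for the finite-difference estimates, so there it costs nothing extra. Both routes give the same lower bound $c'(\theta)\gtrsim 1/(\theta L(\theta)^2)$ near zero.
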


\begin{proof}
By \eqref{eq:singular-regular-split}, the scalar symbol has the form
\begin{equation}
 c(\theta)=\Phi_K(1-\cos\theta)+q_{\mathrm{reg}}(\theta).
 \label{eq:capacitance-symbol-core-plus-remainder}
\end{equation}
The derivative formula following from \eqref{eq:PhiK-integral-representation} is
\begin{equation}
 \Phi_K^{\prime}(t)=
 \frac{4\pi K}{t(t+K)\log^2(1+K/t)}.
 \label{eq:PhiK-derivative-for-monotonicity}
\end{equation}
Since $1-\cos\theta\sim\theta^2/2$ and
$\sin\theta\sim\theta$ as $\theta\to 0$, one has
\begin{equation}
 \Phi_K^{\prime}(1-\cos\theta)\sin\theta
 \sim\frac{1}{\theta\log^2(e/\theta)}.
 \label{eq:singular-core-positive-derivative}
\end{equation}
On the other hand, Proposition~\ref{prop:core-and-remainder} gives
\begin{equation}
 |q_{\mathrm{reg}}^{\prime}(\theta)|\leq
 \frac{C}{\log(e/\theta)}.
 \label{eq:regular-remainder-small-derivative}
\end{equation}
The positive quantity in \eqref{eq:singular-core-positive-derivative} dominates \eqref{eq:regular-remainder-small-derivative}. Hence, there is $\theta_0\in(0,\pi)$ such that
\begin{equation}
 c^{\prime}(\theta)>0,
 \qquad0<\theta<\theta_0.
 \label{eq:capacitance-low-energy-monotonicity}
\end{equation}
The Dirichlet principle gives $c(\theta)>0$ for $\theta\neq0$ (cf. \cite{cbms,ammari2026longrange}). Therefore,
\[
 m_0:=\min_{\theta_0\leq|\theta|\leq\pi}c(\theta)>0.
\]
Choose $E_{\mathrm{reg}}<\min\{m_0,c(\theta_0)\}$. Since $c(0)=0$, and $c$ is strictly increasing on $(0,\theta_0)$, each $E\in(0,E_{\mathrm{reg}})$ has exactly one positive preimage $\theta(E)\in(0,\theta_0)$. Evenness of $c(\theta)$ gives the negative preimage, and the definition of $m_0$ excludes all other preimages. For compact $\mathcal I\Subset(0,E_{\mathrm{reg}})$, the continuity of the inverse of the derivative $c'$ gives the stated uniform bounds.
\end{proof}


By the verifications of assumptions performed in this section, the absence of absolutely continuous spectrum of the capacitance operator under the geometric hypotheses and the random law assumptions stated in Section \ref{sec_main_result} follows directly from our abstract result Proposition \ref{prop:local-no-ac}:

\begin{corollary}[Absence of absolutely continuous spectrum on the acoustic window]
\label{cor:scalar-all-assumptions}
Let $D\Subset(-a/2,a/2)\times\mathbb R^2$ be connected with $C^{2,\eta}$ boundary, and suppose that there is one resonator per unit cell. Then there are $\epsilon_0>0$ such that, for every $0<\epsilon<\epsilon_0$, $\mathcal I\Subset(0,E_{\mathrm{reg}})$ and almost every realization, it holds that
\begin{equation}
 \mathbbm 1_{\mathcal I}(\mathcal C_{\omega,\epsilon})
 P_{\mathrm{ac}}(\mathcal C_{\omega,\epsilon})=0.
 \label{eq:scalar-final-no-ac}
\end{equation}
\end{corollary}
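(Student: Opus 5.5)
The corollary follows by applying Proposition~\ref{prop:local-no-ac} with $U:=(0,E_{\mathrm{reg}})$ and $h:=c$, the dimensionless capacitance symbol \eqref{eq:dimensionless-capacitance-symbol}. The task is to verify, one by one, the hypotheses of that proposition for the one-resonator chain; every ingredient has already been established earlier in this section.

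\emph{Decay and finite differences.} Proposition~\ref{prop:verify-critical-differences} gives $|\Delta^{(j)}c_n|\leq C_j(1+|n|)^{-j-1}\log^{-2}(e+|n|)$ for $j=0,1,2$ and $n\neq0$. This yields Assumption~\ref{asmp_decay} and Assumption~\ref{asmp_finite-difference} with $\beta=2$. The case $n=0$ of the finite-difference bounds only costs a constant, since $\Delta c_0$ and $\Delta^{(2)}c_0$ are finite numbers. If $a\neq1$ or $|\partial D|\neq1$, one rescales as indicated at the start of the subsection, which changes only the constants.

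\emph{Symbol regularity.} From \eqref{eq:basic-symbol-properties}, $c$ is real, even, continuous on $[-\pi,\pi]$ with $c(0)=0$, and smooth away from $\theta=0$. Continuity at $\theta=0$ follows from \eqref{eq:exact-capacitance-Schur-complement}, since $|\kappa(\theta)|\to\infty$ there. Proposition~\ref{prop:capacitance-regular-acoustic-window} then supplies, for every $E\in(0,E_{\mathrm{reg}})$, exactly two preimages $\pm\theta(E)$ with $0<\theta(E)<\pi$, such that $c$ is $C^2$ near them and $c'(\theta(E))>0$. This is precisely the local Floquet hypothesis of Proposition~\ref{prop:local-no-ac}, and moreover $U\subset\mathbb R\setminus\{0\}$.

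Proposition~\ref{prop:local-no-ac} now produces $\epsilon_0>0$ and a probability-one event $\Omega_*$ with $\mathbbm 1_U(\mathcal C_{\omega,\epsilon})P_{\mathrm{ac}}(\mathcal C_{\omega,\epsilon})=0$ for all $\omega\in\Omega_*$ and $0<\epsilon<\epsilon_0$. For any $\mathcal I\Subset U$, the spectral projections satisfy $\mathbbm 1_{\mathcal I}(\mathcal C_{\omega,\epsilon})=\mathbbm 1_{\mathcal I}(\mathcal C_{\omega,\epsilon})\mathbbm 1_U(\mathcal C_{\omega,\epsilon})$. Since these projections commute with $P_{\mathrm{ac}}$, this gives \eqref{eq:scalar-final-no-ac}.

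The only delicate point is the uniformity of $\epsilon_0$ over $E\in U$. The gap construction of Proposition~\ref{prop:block-supported-periodic-gaps} needs only the positivity condition \eqref{eq:scalar-uniform-positive-weights}, which is independent of $E$. All the $E$-dependent sparse-gap constants are absorbed into the choice of $\eta$ and $p$ for each fixed $E$. The event $\Omega_*$ from Lemma~\ref{lem:dense-two-sided-orbit} does not depend on $E$, and the countable-subcover argument of Section~\ref{sec:trace-class-block-decoupling} handles the covering of $U$. I would check this uniformity explicitly, but I expect no genuine obstacle.
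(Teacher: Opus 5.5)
Your proposal is correct and follows the paper's argument: the paper also obtains the corollary by applying Proposition~\ref{prop:local-no-ac} with $U=(0,E_{\mathrm{reg}})$, using Proposition~\ref{prop:verify-critical-differences} for Assumptions~\ref{asmp_decay} and \ref{asmp_finite-difference} (with $\beta=2$) and Proposition~\ref{prop:capacitance-regular-acoustic-window} for the local Floquet hypothesis. Your extra remarks are harmless details the paper leaves implicit (the $n=0$ case, continuity of $c$ at $\theta=0$ via the Schur-complement formula, and passing from $U$ to $\mathcal I\Subset U$), and the uniformity of $\epsilon_0$ in $E$ is already part of the statement of Proposition~\ref{prop:local-no-ac}, so you need no further check at this level.
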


As we have indicated before, the reflection symmetry of the reference resonator gives rise to the global monotonicity of the Floquet symbol of the capacitance operator. To be precise, the reflection operation is defined as $\mathcal R(x_1,x_\perp):=(-x_1,x_\perp)$, and we assume
\[
 \mathcal R D=D,
\]
with $D\Subset(-a/2,a/2)\times\mathbb R^2$ being the reference resonator.

\begin{proposition}[Global monotonicity under reflection symmetry]
\label{prop:reflection-global-monotonicity}
Assume that $\partial D$ is $C^{2,\eta}$ and $\mathcal R D=D$. Then, it holds that
\begin{equation}
 c'(\theta)>0,\quad 0<\theta<\pi.
 \label{eq:reflection-global-positive-derivative}
\end{equation}
In particular, $c$ is strictly increasing on $(0,\pi)$ and satisfies
Assumption~\ref{asmp_eigenvalue_monotone}.
\end{proposition}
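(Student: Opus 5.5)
We will prove Proposition~\ref{prop:reflection-global-monotonicity} by a variational (Dirichlet-principle) characterization of the Floquet capacitance, following the half-cell idea announced in the introduction (Appendix~\ref{app:reflection-monotonicity}). The first step is to rewrite $c(\theta)$ as the minimum of a Dirichlet energy. We set $u_\theta$ to be the $\theta$-quasiperiodic harmonic function in $\mathbb R^3\setminus\overline{\mathcal D}$ with $u_\theta=1$ on $\partial D$ (and $u_\theta=e^{in\theta}$ on $\partial D_n$), decaying in $|x_\perp|$. Green's identity identifies $c(\theta)$, up to the normalization fixed by \eqref{eq:Floquet-capacitance}, with
\[
 c(\theta)=\min\Big\{\int_{Y_c\setminus\overline D}|\nabla u|^2\,dx:\ u=1\text{ on }\partial D,\ u(x+e_1)=e^{i\theta}u(x)\Big\},
\]
which is consistent with $c>0$ for $\theta\neq0$ and $c(0)=0$.

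Next we use the reflection symmetry to reduce to a half-cell problem with $\theta$ entering only through a boundary condition. Since $\mathcal R D=D$, the minimizer satisfies $u_\theta(\mathcal R x)=\overline{u_\theta(x)}$, because the problem is invariant under $u\mapsto\overline{u\circ\mathcal R}$ and the minimizer is unique. On the half-cell boundary $\{x_1=1/2\}$, quasiperiodicity combined with this symmetry gives $u(1/2,x_\perp)=e^{i\theta}\,\overline{u(1/2,x_\perp)}$. Hence $e^{-i\theta/2}u$ is real on $\{x_1=1/2\}$, and by the same argument $u$ is real on $\{x_1=0\}$ (assuming the symmetry plane passes through the centre of the cell; otherwise we shift coordinates). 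Writing $u=v+iw$, the energy on the half-cell $Y_+:=(0,1/2)\times\mathbb R^2\setminus\overline D$ decouples into real problems: $w=0$ on $\{x_1=0\}$, and on $\{x_1=1/2\}$ the pair $(v,w)$ is constrained to the line $w=\tan(\theta/2)\,v$. Equivalently, after a unitary rotation, the half-cell energy is the minimum over real pairs with a boundary condition of the form $\cos(\theta/2)w-\sin(\theta/2)v=0$ on $\{x_1=1/2\}$.

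The monotonicity then comes from a Hadamard/Feynman--Hellmann formula. Differentiating the minimum value in $\theta$ yields a boundary integral on $\{x_1=1/2\}$ of the form $c'(\theta)=\int_{\{x_1=1/2\}}\operatorname{Re}\big(\overline{\partial_{x_1}u_\theta}\,(\ldots)u_\theta\big)$, which should reduce to $2\int_{\{x_1=1/2\}}\partial_{x_1}\operatorname{Re}(e^{-i\theta/2}u)\cdot\operatorname{Im}(e^{-i\theta/2}u)$ type expressions, or more cleanly to $\sin\theta$ times an energy. We would first try a comparison argument instead of computing this integral. By reflection across $\{x_1=1/2\}$, the half-cell minimizer is an odd/even combination of two monotone problems: the Neumann-type (even, $\theta=0$) problem and the Dirichlet-type (odd, $\theta=\pi$) problem. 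For $0<\theta<\theta'<\pi$, we would show that the admissible class at $\theta'$ can be mapped into the admissible class at $\theta$, yielding a non-strict inequality, and then use unique continuation to upgrade it to strictness. The key inequality $c'(\theta)>0$ should follow from a formula $c'(\theta)=\frac{1}{2}\int_{\{x_1=1/2\}}|\text{trace quantity}|^2$, which is strictly positive unless $u_\theta$ vanishes identically on the plane. Unique continuation (Holmgren) for harmonic functions rules out that degenerate case, since it would force $u_\theta$ to be constant, contradicting $u=1$ on $\partial D$ with decay at infinity.

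The main obstacle is making the derivative formula rigorous and sign-definite. This requires justifying the differentiation of the minimum value despite the unbounded transverse domain, and identifying the boundary term as a perfect square. We would handle the unbounded domain using the $|x_\perp|$-decay of $u_\theta$ for $\theta\neq0$, via layer potentials from \eqref{eq:quasiperiodic-single-layer}, together with $C^2$ smoothness from \eqref{eq:basic-symbol-properties}. Finally, Assumption~\ref{asmp_eigenvalue_monotone} follows by combining $c'>0$ with the evenness and continuity of $c$ and its $C^\infty$ regularity on $(0,\pi)$.
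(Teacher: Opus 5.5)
Your first two steps (the Dirichlet principle, the symmetry $u_\theta(\mathcal R x)=\overline{u_\theta(x)}$ from uniqueness, and the half-cell constraints $w=0$ on $\{x_1=0\}$ and $w=\tan(\theta/2)\,v$ on $\{x_1=L\}$) match the paper. The step that actually proves $c'(\theta)>0$ is missing. You list three candidate mechanisms: a Hadamard/Feynman--Hellmann boundary integral, an unspecified map between admissible classes, and a conjectured identity $c'(\theta)=\frac12\int|\text{trace quantity}|^2$. You derive none of them.

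The idea that closes the gap is the rescaling $w=\tan(\theta/2)\,z$. It turns the face condition into $\gamma_L z=\gamma_L v$. The admissible set $\mathscr A$ is then independent of $\theta$: it consists of real pairs $(v,z)$ on $\Pi=(0,L)\times\mathbb R^2$ with $v=1$ and $z=0$ in $D_+$, $\gamma_0 z=0$, and $\gamma_L z=\gamma_L v$. One gets
\[
 c(\theta)=F(s),\qquad s=\tan^2(\theta/2),\qquad
 F(s)=2\min_{(v,z)\in\mathscr A}\bigl\{A(v)+sA(z)\bigr\},
\]
with $A(\cdot)=\int_\Pi|\nabla\cdot|^2\,dx$. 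Testing the minimizer $(v_s,z_s)$ at another parameter gives $F(s')\le F(s)+2(s'-s)A(z_s)$ for all $s,s'>0$. This is the two-sided bound $2(s_2-s_1)A(z_{s_2})\le F(s_2)-F(s_1)\le 2(s_2-s_1)A(z_{s_1})$. Hence $F'(s)=2A(z_s)$, using either continuity of $s\mapsto A(z_s)$ or the differentiability of $c$ on $(0,\pi)$. The chain rule then gives $c'(\theta)=2A(z_s)\tan(\theta/2)\sec^2(\theta/2)$.

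Your admissible-class map is in effect $(v,w)\mapsto(v,\frac{t}{t'}w)$ with $t=\tan(\theta/2)$. Used only qualitatively, it gives strict monotonicity, not the pointwise inequality $c'>0$ that is claimed. Note also that the derivative is a bulk energy, not a pointwise square on the face. Any face-integral form of it, such as $\int_{\{x_1=L\}}z_s\,\partial_{x_1}z_s$ from Green's formula, is sign-definite only because it equals $A(z_s)$.

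The nondegeneracy argument also needs correcting. What must be excluded is $A(z_s)=0$, i.e. $z_s\equiv0$, i.e. $u_\theta$ real and even in $x_1$. Vanishing of the Dirichlet trace on a plane does not by itself allow unique continuation; for example, $x_1$ is harmonic and vanishes on $\{x_1=0\}$. So your inference from vanishing on the plane to $u_\theta$ being constant does not follow. You need full Cauchy data, and this is where $0<\theta<\pi$ enters. For real, even $u_\theta$, quasi-periodicity gives $(1-e^{i\theta})\gamma_+u_\theta=0$. Oddness of $\partial_{x_1}u_\theta$, combined with the natural quasi-periodicity of the normal derivative, gives $(1+e^{i\theta})\,\partial_{x_1}u_\theta=0$ on the face. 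Only then does unique continuation force $u_\theta\equiv0$ in the component of $Y_c\setminus\overline D$ adjacent to the face. That contradicts $u_\theta=1$ on the portion of $\partial D$ bounding that component.
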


The proof is given in Appendix~\ref{app:reflection-monotonicity}, which is based on the application of the Dirichlet principle. Now we immediately conclude that the following result holds. In fact, Proposition \ref{prop:verify-critical-differences} verifies Assumptions \ref{asmp_decay} and \ref{asmp_finite-difference} with $\beta=2$; Proposition \ref{prop:reflection-global-monotonicity} and \eqref{eq:basic-symbol-properties}, with continuity stated there, verify Assumption \ref{asmp_eigenvalue_monotone}. Theorem \ref{eq:smoothed-cut-limit} gives the conclusion.

\begin{corollary}[Global absence of absolutely continuous spectrum for reflection-symmetric chains]
\label{cor:reflection-global-no-ac}
Suppose the assumptions in Corollary~\ref{cor:scalar-all-assumptions} hold and additionally that $\mathcal R D=D$. Then there exists $\epsilon_0>0$ such that, for every $0<\epsilon<\epsilon_0$,
\begin{equation}
 P_{\mathrm{ac}}(\mathcal C_{\omega,\epsilon})=0
 \quad\text{almost surely}.
 \label{eq:reflection-global-no-ac}
\end{equation}
\end{corollary}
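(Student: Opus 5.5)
This corollary follows from Theorem \ref{thm_absence_ac} once its three hypotheses are checked for the capacitance operator $\mathcal C$ of a reflection-symmetric one-resonator chain. The plan is to verify Assumptions \ref{asmp_decay}, \ref{asmp_finite-difference} and \ref{asmp_eigenvalue_monotone} from results already established in this section and then invoke the theorem.

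For Assumption \ref{asmp_decay}, the coefficient bound \eqref{eq:capacitance-coefficient-bound} gives $|c_n|\le C/((1+|n|)\log^2(e+|n|))$. Since $\mathcal C_{n,m}=c_{n-m}$ and $\log(|k|+1)\sim\log(e+|k|)$ for $|k|\ge1$, this is the assumption with $\beta=2$. For Assumption \ref{asmp_finite-difference}, Proposition \ref{prop:verify-critical-differences} gives the bounds for $j=1,2$ and $n\neq0$. At $n=0$ the differences are finite numbers, so enlarging $M_j$ covers that case. Only the convention $c_n=\mathcal C_{0,n}$ versus $c_{n-m}$ matters here, and it is harmless because $c_n$ is even.

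For Assumption \ref{asmp_eigenvalue_monotone}, the symbol $\hat{\mathcal C}(\kappa)$ equals $c(\kappa)$ up to the sign convention $e^{\mp in\kappa}$, which is irrelevant because $c$ is even. Reality, evenness and smoothness on $(0,\pi)$ come from \eqref{eq:basic-symbol-properties}. Continuity on $[-\pi,\pi]$ follows from $c(0)=0$ together with $c=\Phi_K(1-\cos\theta)+q_{\mathrm{reg}}$, using $\Phi_K(0^+)=0$ and $q_{\mathrm{reg}}(\theta)\to0$ by \eqref{eq_complete_bernstein_1}. Alternatively, it follows from the absolute summability of $c_n$. Strict monotonicity on $(0,\pi)$ is Proposition \ref{prop:reflection-global-monotonicity}.

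The main obstacle is the $C^2$ requirement. The symbol is $C^\infty$ on $(0,\pi]$, but it is not $C^2$ at $\theta=0$, because $c'(\theta)\sim 1/(\theta\log^2)$ blows up there. I would read the assumption as ``twice continuously differentiable on $(0,\pi)$,'' which is exactly what the sparse-gap theorem needs. That theorem uses regularity only at the sampling points $e_{k,p}$ lying in compact subintervals $\mathcal J_E$ of the interior of the spectrum, so their preimages stay away from $0$ and $\pi$.

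One spectral point also needs care. Theorem \ref{thm_absence_ac} covers all $E\neq0$, with $E=0$ handled by the no-atom argument. Energies at the band edges $E=c(0)=0$ and $E=c(\pi)$ need no gap construction, since compact windows are taken in the interior of the band. Energies outside $\sigma(\mathcal C)$ are trivial, as in the proof of Proposition \ref{prop:block-supported-periodic-gaps}. With these checks in place, Theorem \ref{thm_absence_ac} yields $\epsilon_0>0$ such that $P_{\mathrm{ac}}(\mathcal C_{\omega,\epsilon})=0$ almost surely for every $0<\epsilon<\epsilon_0$.
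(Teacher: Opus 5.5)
Your proposal is correct and follows the paper's argument. Proposition~\ref{prop:verify-critical-differences} supplies Assumptions~\ref{asmp_decay} and~\ref{asmp_finite-difference} with $\beta=2$, Proposition~\ref{prop:reflection-global-monotonicity} together with \eqref{eq:basic-symbol-properties} supplies Assumption~\ref{asmp_eigenvalue_monotone}, and Theorem~\ref{thm_absence_ac} then gives the conclusion; your extra bookkeeping (the index and $e^{\pm in\kappa}$ sign conventions, the $n=0$ case, continuity at $\theta=0$, and reading the $C^2$ requirement on the open interval $(0,\pi)$) is sound and makes explicit what the paper leaves implicit.
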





\section{Concluding remarks} \label{sec:conclusion}
In this paper, we have proved the absence of absolute continuous spectrum for critical long-range lattice operators under multiplicative i.i.d. perturbation, assuming a monotonicity condition on the Floquet symbol and a finite-difference estimate. All the assumptions are verified on a concrete physical model, the capacitance operator associated with the 1D-in-3D resonator chain. These results provide an analytical support for the numerical localization phenomena reported in~\cite{ammari2026longrange}.

There are several questions that remain open. In particular, the absence of absolutely continuous spectrum does not distinguish the pure point spectrum from the singular continuous spectrum. Proving pure-point localization at critical decay requires additional control of the singular spectral component, which appears to be hard. It will be interesting to see whether additional structural information of the capacitance operator helps to prove the pure-point localization. On the other hand, we point out that the global absence of absolute continuous spectrum is only verified under an additional reflection symmetry on the resonators; extending this global conclusion to geometries without the stated reflection symmetry requires further information about the symbol. This is left for future study.

\bigskip
\textbf{Conflict of Interest Statement}: On behalf of all authors, the
corresponding author states that there is no conflict of interest.

\bigskip
\textbf{Data Availability Statement}: On behalf of all authors, the
corresponding author confirms that the current paper does not have
associated data.

\bigskip
\textbf{AI Usage}: The authors first realized that the sparse-gap construction works for our critical long-range lattice operators, and then asked ChatGPT 5.6 Pro to exploit additional structural information of the capacitance operator by which a trace-class cut estimate is available; it pointed out the finite-difference structure and formulated the proof. The authors then substantially simplified the arguments and rewrote the manuscript.

\appendix
\section{Global monotonicity from reflection symmetry}
\label{app:reflection-monotonicity}

In this appendix, we prove Proposition~\ref{prop:reflection-global-monotonicity} using the quasiperiodic Dirichlet principle. For convenience, we write $L=a/2$ and 
\[
Y_c:=(-L,L)\times\mathbb R^2,\quad \Pi:=(0,L)\times\mathbb R^2,\quad D_+:=D\cap\Pi.
\]
The reference domain satisfies $D\Subset Y_c$ and the reflection symmetry $\mathcal R D=D$, where $\mathcal R(x_1,x_\perp)=(-x_1,x_\perp)$. For a function on $Y_c$, denote its traces onto the face $x_1=\pm L$ by $\gamma_\pm$, respectively. Similarly, for functions on $\Pi$, write $\gamma_0$ and $\gamma_L$ for the traces at $x_1=0$ and $x_1=L$.

\begin{proof}[Proof of Proposition~\ref{prop:reflection-global-monotonicity}]
{\color{blue}Step 1.} We first note that the capacitance equals the minimum of Dirichlet energy under the quasi-periodic boundary condition. To be precise, fix $0<\theta<\pi$ and define
\begin{equation}
 \mathscr U_\theta=
 \left\{u\in H^1(Y_c;\mathbb C):
 u=1\ \text{a.e. in }D,\quad
 \gamma_+u=e^{i\theta}\gamma_-u\right\}.
 \label{eq:reflection-full-admissible-space}
\end{equation}
Then, as proved in \cite{cbms}, it holds that
\begin{equation}
 c(\theta)=\min_{u\in\mathscr U_\theta}
 \int_{Y_c}|\nabla u|^2\,dx,
 \label{eq:reflection-dirichlet-principle}
\end{equation}
recalling that $c(\theta)$ is the Floquet symbol of the capacitance operator. Moreover, the coercivity and strict convexity give a unique minimizer $u_\theta$; see \cite{cbms} for a detailed discussion.

{\color{blue}Step 2.} We next represent the symbol $c(\theta)$ in a convenient form by which the continuity in $\theta$ is clearer; see \eqref{eq:reflection-fixed-space-minimum}. First, we note that the anti-linear map
\[
 (\mathscr T u)(x)=\overline{u(\mathcal R x)}
\]
is invariant on the quasi-periodic function space $\mathscr U_\theta$ and preserves the Dirichlet energy in \eqref{eq:reflection-dirichlet-principle}. Consequently, the uniqueness of the minimizer $u_\theta$ implies
\begin{equation}
 u_\theta(\mathcal R x)=\overline{u_\theta(x)}.
 \label{eq:reflection-conjugation-symmetry}
\end{equation}
Thus, its real part is even in $x_1$ and its imaginary part is odd. On the face $x_1=L$, the two trace conditions imply
\[
 \gamma_Lu_\theta=e^{i\theta}\overline{\gamma_Lu_\theta}.
\]
Writing $u_\theta=v+iw$, this is equivalent to
\begin{equation}
 \gamma_Lw=\tan(\theta/2)\gamma_Lv.
 \label{eq:reflection-face-relation}
\end{equation}
In addition, $v=1$ and $w=0$ in $D_+$, and $\gamma_0w=0$. Hence, if we write $w:=tz$ with $t=\tan(\theta/2)>0$, it holds that
\begin{equation}
 \begin{split}
 \mathscr A=\{(v,z)\in H^1(\Pi;\mathbb R)^2:
 &\ v=1,\ z=0\ \text{a.e. in }D_+,\\
 &\ \gamma_0z=0,\quad\gamma_Lz=\gamma_Lv\}.
 \end{split}
 \label{eq:reflection-half-admissible-space}
\end{equation}
We remark that the set $\mathscr A$ is independent of
$\theta$. Conversely, every pair $(v,z)$ in $\mathscr A$ gives an admissible function in $\mathscr U_\theta$ by setting $u:=v+itz$ on $\Pi$ and extending it through $u(-x_1,x_\perp)=v(x_1,x_\perp)-itz(x_1,x_\perp)$; the verification of boundary conditions are direct. In summary, the Floquet symbol $c(\theta)$ admits the following characterization:
\begin{equation} \label{eq:reflection-fixed-space-minimum}
c(\theta)=F(s),\quad s=\tan^2(\theta/2),
\end{equation}
where
\begin{equation*}
F(s)=2\min_{(v,z)\in\mathscr A}
 \left\{A(v)+sA(z)\right\},\quad s>0,
\end{equation*}
and
\begin{equation*}
A(v):=\int_\Pi|\nabla v|^2\,dx .
\end{equation*}

{\color{blue}Step 3.} We claim that, for any $s>0$, the minimizer $(v_{s},z_{s})$ satisfies
\begin{equation}
 B(z_s)>0,
 \label{eq:reflection-strict-energy}
\end{equation}
i.e., the second energy for the minimizer is strictly positive. Indeed, if $B(z_s)=0$, then $z_s$ is constant on the connected strip $\Pi$; its zero trace at $x_1=0$ makes it identically zero. The associated full-cell minimizer $u_{\theta}$ is therefore real and even. However, the quasi-oeriodic boundary condition leads to
\[
 (1-e^{i\theta})\gamma_+u_\theta=0,
\]
hence $\gamma_+u_\theta=0$; similarly, one can show the normal derivative also vanishes in the sense of trace. Consequently, the unique extension property of harmonic function indicates that $u_\theta$ vanishes in the exterior component adjacent to the boundary. However, that component has a nonempty portion of $\partial D$ in its boundary: one can follow a line parallel to the chain from the face toward a rightmost exposed portion of $D$. This contradicts the boundary trace $u_\theta=1$ there, proving \eqref{eq:reflection-strict-energy}.

{\color{blue}Step 4.} We now prove the desired monotonicity of Floquet symbol. Suppose $0<s_1<s_2$. We know by the result of Step 2 that
\begin{equation}
 2(s_2-s_1)B(z_{s_2})
 \leq F(s_2)-F(s_1)
 \leq2(s_2-s_1)B(z_{s_1}).
 \label{eq:reflection-comparison-inequality}
\end{equation}
Then, using the result of Step 3, we directly arrives at
\[
 F'(s)=2B(z_s)>0,
\]
where the continuity of $B(z_s)$ is applied (the details are omitted here). This completes the proof of \eqref{eq:reflection-global-positive-derivative}.
\end{proof}

\footnotesize
\bibliographystyle{plain}
\bibliography{ref}

\end{document}